\newcommand{\cl}{C \kern -0.1em \ell}
\newcommand{\opn}{\operatorname}
\newcommand{\clpq}{\cl_{p,q}}
\newcommand{\clpqr}{\cl_{p,q}(\Bbb{R})}
\newcommand{\clpqc}{\cl_{p,q}(\Bbb{C})}
\newcommand{\clunotres}{\cl_{1,3}}
\newcommand{\cldostres}{\cl_{2,3}}
\newcommand{\clunotresr}{\cl_{1,3}(\Bbb{R})}
\newcommand{\clunotresc}{\cl_{1,3}(\Bbb{C})}
\newcommand{\cldostresr}{\cl_{2,3}(\Bbb{R})}
\newcommand{\pinunotres}{\opn{Pin}({1,3})}
\newcommand{\spindostres}{\opn{Spin}({2,3})}
\newcommand{\Ad}{\opn{Ad}}
\newcommand{\R}{\Bbb{R}}
\newcommand{\K}{\Bbb{K}}
\newcommand{\ub}{\mathbf{u}}
\theoremstyle{plain}
\newtheorem{theo}{Theorem}
\theoremstyle{definition}
\newtheorem{defn}{Definition}
\newtheorem{obs}{Observation}
\begin{document}
\title[Complexifying the STA by means of an extra timelike dimension]{Complexifying the spacetime algebra by means of an extra timelike dimension: Pin, Spin and algebraic spinors}
\author[M.R.A. Arcodía]{Marcos R. A. Arcod\'{\i}a}
\address{%
Instituto de Astronom\'{i}a y F\'{i}sica del Espacio (CONICET \& UBA), Casilla de Correo 67, Sucursal 28, 1428 Buenos Aires, Argentina}
\email{marcodia@iafe.uba.ar}
\begin{abstract}
Because of the isomorphism $\clunotresc\cong\cldostresr$, it is possible to complexify the spacetime Clifford algebra $\clunotresr$ by adding one additional timelike dimension to the Minkowski spacetime. In a recent work we showed how this treatment provide a particular interpretation of Dirac particles and antiparticles in terms of the new temporal dimension. In this article we thoroughly study the structure of the real Clifford algebra $\cldostresr$ paying special attention to the isomorphism $\clunotresc\cong\cldostresr$ and the embedding $\clunotresr\subseteq\cldostresr$. On the first half of this article we analyze the Pin and Spin groups and construct an injective mapping $\opn{Pin}(1,3)\hookrightarrow\opn{Spin}(2,3)$, obtaining in particular elements in $\opn{Spin}(2,3)$ that represent parity and time reversal. On the second half of this paper we study the spinor space of the algebra and prove that the usual structure of complex spinors in $\clunotresc$ is reproduced by the Clifford conjugation inner product for real spinors in $\cldostresr$.
\end{abstract}
\maketitle
\sloppy
\section{Introduction}

The use of Clifford algebra (CA) provides formalization and foundation when working with spinor fields in physics. For instance, the fact that in ordinary quantum mechanics, the state of a $1/2$-spin particle is represented by a two-component complex vector in a Hilbert space with a positive definite metric can be inferred by constructing the Clifford algebra of $3D$ space. As a matter of fact, this Clifford algebra is isomorphic to $\mathcal{M}(2,\mathbb{C})$, and consequently the representation space for its irreducible representation is $\mathbb{C}^{2}$. However, there are many definitions of spinors --- the one used above is that of a real algebraic spinor.

It is also possible to consider the state to be represented by a real classical spinor, this is, a representation of the group $\opn{Spin}(3)$, in which case we would obtain that the state is given by a single quaternion. If we complexify the classical spinor, we obtain back the algebraic spinor in $\mathbb{C}^{2}$ (but as semi-spinors instead). Nonetheless, not in every case the complexified classical spinors are straightforwardly related to real algebraic spinors. Consider for instance, the spacetime algebra: this is the CA of the Minkowski space-time with signature $(+---)$. A real algebraic spinor in this algebra is represented by an element in $\mathbb{H}^{2}$, while a complex classical spinor is an element in $\mathbb{C}^{2}\oplus\mathbb{C}^{2}$, and a complex algebraic spinor is an element in $\mathbb{C}^{4}$. In the light of these observations, a question we may ask is what is the correct structure of a physical spinor space? Should we require it to be a representation just of a certain group, or of the whole Clifford algebra?

In the context of the Dirac equation for the mentioned Minkowski spacetime:
\begin{equation}\label{diraceq}
\gamma_{\mu}\partial^{\mu}\psi+i\frac{mc}{\hbar}\psi=0 \quad,
\end{equation}
this question has a simple answer: since the generators of the algebra, $\gamma_{\mu}$, must act on the spinor $\psi$ then, we have to require that whatever space $\psi$ lives in, it has to carry a representation of the whole Clifford algebra. Furthermore, since the mass term is multiplied by the imaginary unit $i$, we can conclude that the appropriate representation space for $\psi$ is that of the complexified Clifford algebra representation.

Of course, this answer may be indeed too simple: as was noted by Hestenes in his foundational papers\cite{hestenes1,hestenes2,hestenes3}, it is possible to work only with the real Clifford algebra by replacing algebraic spinors with \emph{operator spinors}\footnote{see chapter 6 of \cite{VdR} for terminology.} and the Dirac equation with the Dirac-Hestenes equation. However, if we want to recover the algebraic spinor appearing in Eq. \eqref{diraceq}, it is necessary to work in the complex spacetime algebra\footnote{The idempotent $u_{1}$ appearing in \cite{hestenes3} can only be constructed in the complex case: in the real spacetime algebra such an element doesn't exist.}.

If we accept to work with the algebraic spinors appearing in the Dirac equation \eqref{diraceq}, it is necessary to consider the complex Clifford algebra $\clunotresc$. Regarding this aspect, a known fact about $\clunotresc$ is that it is isomorphic to the real Clifford algebra $\cldostresr$. This isomorphism provides an equivalence between complexifying the Clifford algebra $\clunotresr$ and adding an extra timelike dimension to the spacetime. Furthermore, this isomorphism also provides a solid ground to study the relations between 4D and 5D Dirac-like equations, since the space of complex spinors in $\clunotres$ is isomorphic to the space of real spinors in $\cldostres$.

An alternative possibility for complexification has been studied in \cite{paravectors}, based on the isomorphism $\clunotresc\cong\cl_{4,1}(\R)$. In this approach, an additional spacelike dimension is used, but the signature of space and time coordinates has to be interchanged.

The complexification of the STA based on $\cldostres$ lead us to think about the nature of complex numbers in the Dirac theory, which in this case is linked directly to the existence of an extra timelike dimension. Indeed, the center of $\cldostres$ is identified with $\mathbb{C}$, while the center of the real STA is simply $\R$. 

From a physical point of view, extra dimensions have been widely used in the context of unification theories, cosmology and quantum gravity. In particular five dimensional models constitute the minimal extension of spacetime and are considered the low-energy limit of higher dimensional theories (e.g. 11D supergravity)\cite{sugra,sugra5d,wesson}.

A particular use of a 5D spacetime was made by Wesson and collaborators for the \emph{induced matter theory} or IMT\cite{wesson,wessonmash}. This theory was first conceived as a way to fully geometrize the Einstein's equations, as it models the matter content of spacetime as a geometric property of a vacuum (Ricci-flat) 5D manifold\cite{wesson}, but was later extended to the description of other physical theories--- for the Dirac equation it was studied how the mass of the spinor field can be induced from an extra dimension\cite{wesson}. In this regard one can consider that a Dirac spinor satisfies a massless Dirac equation in 5D and that the mass of the spinor is an eigenvalue of the momentum of the particle in the extra direction. Under this mechanism a massive 4D Dirac equation is obtained. The case has been studied for an extra spacelike dimension in different gravitational contexts\cite{ma,sanchez,mona,koci}, and also for an extra timelike dimension\cite{mona,arcodiaparantipar}.

Naturally, when working with multiple time dimensions the problems of tachyons and ghost fields have to be adressed. We have commented on this issues in the context of our formulation in Ref. \cite{arcodiaparantipar}. Physical theories with more than one timelike dimensions have been proposed in different contexts\cite{bars,wesson2t}, and also a considerable amount of work has been done on spaces of signature $(+---+)$ in the framework of anti de Sitter spaces \cite{ads1,ads2}. 

In the recent article\cite{arcodiaparantipar} we combined the induced matter aproach of 5D spinors with the idea that the complex structure of spacetime comes from the existence of the extra timelike dimension, and showed that pure particle and antiparticle rest-solutions of the 4D Dirac equation are eigenspinors of the generators of rotations in the plane of two times.

In the present article we extend our research by analyzing the properties of the Clifford algebra $\cldostresr$, the space of algebraic spinors in this algebra and the inner products that can be defined on spinors.

An important result developed in this article is an expression of parity and coordinate time reversal transformations that could help falsify the extra dimension hypothesis. We find that 4D parity and time reversal are only realizable by means of an adjoint action in $\cldostresr$ by elements that also transform the extra dimension in a non-trivial way. In the context of IMT this implies that the mass would change accordingly, providing in principle an observable difference with respect to time reversal and parity transformations of $\clunotresc$.


In section \ref{secelem} we review the basics of Clifford algebras, paying special atention to the Pin and Spin groups. This makes the article as self contained as possible.

In section \ref{seccl23} we introduce the real Clifford algebra $\cldostresr$, and in section \ref{seccomplex} we define the spacetime algebra and establish the link between this algebra and $\cldostresr$. In this section we also construct the embedding $\pinunotres\hookrightarrow\spindostres$, and analize the representations of parity and time reversal under this mapping.

In section \ref{secmatrep} we review the theory of matrix representations of Clifford algebras, algebraic spinors, and inner products on spinor spaces. In section \ref{secembed}, we apply this formalism to the Clifford algebra $\cldostresr$, reproducing the usual complex structure of $\clunotresc$ in terms of the real algebra $\cldostresr$.

Finally, in section \ref{secfinal} we draw some conclusions and prospects for the topic.

\section{Elementaries of Cifford algebras: the multivector structure}\label{secelem}

Given a real n-dimensional vector space $V$ with a bilinear symmetric form $g:{V}\times{V}\rightarrow{\mathbb{R}}$, we say that $\Phi:{V}\rightarrow{\mathbb{R}}$, defined by $\Phi(v)=g(v,v)$, is the associated \emph{quadratic form} and we call the pair $(V,\Phi)$ a quadratic space. If the form $g$ is non-degenerate we say the quadratic space is regular. Since $g(a,b)=\frac{\Phi(a+b)-\Phi(a)-\Phi(b)}{2}$, no information is lost when passing from $g$ to $\Phi$.

For any quadratic space, one can build an associative unitary real algebra $\cl(\Phi)$, called the \emph{Clifford algebra (CA) for $(V,\Phi)$}. It is possible to define that algebra in different equivalent ways; we shall do it as follows: Let $\{e_{1},..,e_{n}\}$ be a basis for the vector space $V$, and $g_{ij}$ the matrix elements of the bilinear form $g$ in the given basis. The CA is defined by the generators $\{E_{1},...,E_{n}\}$, with the relations:
\begin{equation}
E_{i}E_{j}+E_{j}E_{i}=2g_{ij}\mathbf{1},
\end{equation}
where $\mathbf{1}$ is the unit in $\cl(\Phi)$. We shall just mention that this algebra can also be constructed as a quotient algebra of the tensor algebra modulo certain ideal\cite{Gallier}, but we shall not go further in this subject. Since the tensor algebra of a real vector space is a real algebra, the CA obtained also happens to be a real algebra, although it can be complexified. In this article both real and complex Clifford algebras are going to be relevant.

For any non-degenerate bilinear symmetric form $g$ we can find an orthonormal basis such that it has $p$ vectors with norm $1$ and $q$ vectors with norm $-1$. We call the pair $(p,q)$ \emph{the signature} of $g$. Sometimes, when working in a particular basis, we represent the signature as a $n$-tuple of plus and minus signs; for instance if we are in a space of signature $(2,3)$ in an orthonormal basis in which the first and last vectors have positive norm and the rest have negative norm, we write the signature as $(+,-,-,-,+)$. Because the signature is a geometric invariant and the CA is independent of the choice of basis, we use the notation $\clpqr$ for the real Clifford algebra of signature $(p,q)$, $\clpqc$ for the complex one, and $\clpq$ to express that any of the two cases can be considered.

Because there is an injective function from $V$ to the CA, via $e_{i}\mapsto{E}_{i}$, by abuse of notation we shall refer to the generators of this algebra as $e_{i}$. In the same way, we shall refer to the subspace $\text{span}_{\mathbb{R}}\{E_{1},...,E_{n}\}\subseteq\cl(\Phi)$, as $V$. It happens that this algebra is finite dimensional with dimension $2^{n}$ \cite{Lounesto}, and provided a basis $\{e_{1},...,e_{n}\}$ for $V$, the set of monomials:
\begin{equation}
e_{i_{1}}\dots{e}_{i_{k}} \ : \ 1\leq{i_{1}}<i_{2}<...<i_{k}\leq{n},
\end{equation}
where $k$ runs from $1$ to $n$, together with the identity of the Clifford algebra, $\mathbf{1}$, form a basis for the Clifford algebra, called the \emph{standard basis} associated to the basis $\{e_{1},...,e_{n}\}$ of V.

Since the Clifford algebra has the same dimension that the exterior algebra, we can conclude that they are isomorphic as vector spaces. Furthermore, it is possible to define a wedge product in the Clifford algebra in order to replicate the structure of $k$-vectors of the exterior algebra. We will say that a scalar (i.e. an element proportional to $\mathbf{1}$) is a $0$-vector; that a vector is a $1$-vector; and for any $k\in\{2,...,n\}$ we will define the space of \emph{$k$-vectors}, ${\bigwedge}^{k}$ as:
\begin{equation}
{\bigwedge}^{k}V=\{A^{i_{1}...i_{k}}u_{i_{1}}\dots{u}_{i_{k}} \ | \ g(u_{l},u_{m})=0, \ \ \forall{l,m\in\{i_{1},...,i_{k}\}}\},
\end{equation}
where the sum is finite, and each $u_{l}$ is a vector for $l\in\{i_{1},...,i_{k}\}$. It can be seen that these spaces are linear, and have dimension ${n}\choose{k}$. The set of $2$-vectors (or \emph{bivectors}) is important since it spans the Lie algebra of the $\operatorname{Spin}(\Phi)$ group. The $n$-vectors are also called \emph{pseudoscalars}, and the $(n-1)$-vectors, \emph{pseudovectors}.

An element which is the sum of different types of $k$-vectors (for example the sum of a 1-vector and a bivector) is called a \emph{multivector}, and in consequence $k$-vectors are called \emph{homogeneous multivectors of degree k}. Indeed we have that any element of the Clifford algebra is a multivector, namely:
\begin{equation}
\cl(\Phi)=\bigoplus_{k=0}^{n}{\bigwedge}^{k}V\cong{\bigwedge}V,
\end{equation}
where $\bigwedge{V}$ is the exterior algebra of $V$ and it is isomorphic to the Clifford algebra only in a vector space sense. Furthermore the direct sum appearing in this equation is also a direct sum of vector spaces, since in general spaces of homogeneous multivectors don't posses an algebra structure.

This decomposition allows us to write any element $A\in{\cl}(\Phi)$ as:
\begin{equation}\label{multiv}
A=\sum_{k=0}^{n}\langle{A}\rangle_{k},
\end{equation}
where the $k$-vectors $\langle{A}\rangle_{k}\in{\bigwedge^{k}}$ are uniquely determined. 
\begin{obs}
The structure of $k$-vectors of the Clifford algebra is more easily observed once an orthogonal basis is chosen. In this case, the structure is straightforwardly seen in the standard basis, as illustrated in Figure \ref{general}.
\begin{figure}[h]
\includegraphics[width=0.5\textwidth]{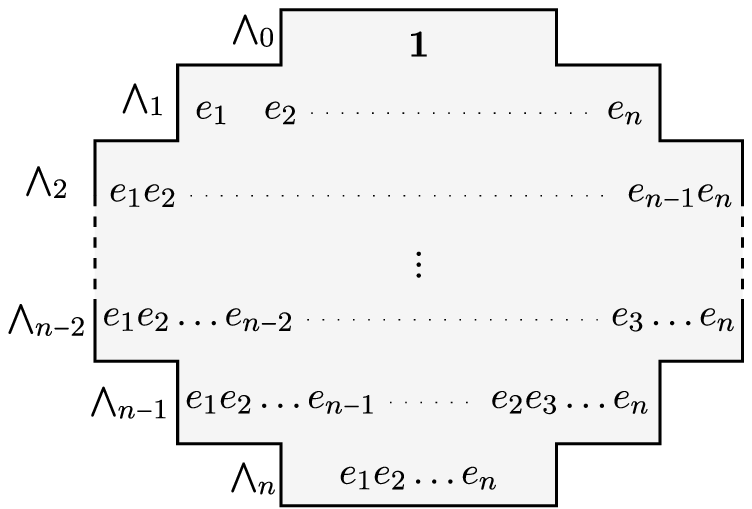}
\caption{$\mathbb{Z}_{n}$-grading of the CA, reproduced by an orthogonal basis.}\label{general}
\end{figure}\\
If we had a more general basis, $\{v_{1},...,v_{n}\}$, we can achieve a similar diagram by considering the basis $\{\mathbf{1}\}\cup\{v_{i_{1}}\wedge...\wedge{v_{i_{k}}} \ : \ 1\leq{i_{1}}<i_{2}<...<i_{k}\leq{n}\}$, which is different from the standard basis associated to $\{v_{1},...,v_{n}\}$. Due to the fact that ${{n}\choose{n-k}}={{n}\choose{k}}$, this diagram will always form a shape called \emph{guemil}\footnote{N.B.: Pronounced /gemil/ in international phonetic alphabet, which is close to the English pronunciation of the word ``gaemil''. It is a symbol present in the culture of the Mapuche people, which is a native people from the region of Patagonia, in the south of Argentina and Chile. As I searched for a name for this shape this was the one I found more accurate.}.
\end{obs}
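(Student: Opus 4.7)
The observation packages three separate assertions that I would verify in turn. First, in an orthogonal basis the standard-basis monomials $e_{i_1}\cdots e_{i_k}$ are genuine $k$-vectors, so the grading is directly exhibited at the level of the basis. Second, for a general (not necessarily orthogonal) basis $\{v_1,\dots,v_n\}$, the family $\{\mathbf{1}\}\cup\{v_{i_1}\wedge\cdots\wedge v_{i_k} \, : \, 1\le i_1<\cdots<i_k\le n\}$ is still a vector-space basis of $\cl(\Phi)$ compatible with the grade decomposition $\bigoplus_{k=0}^{n}\bigwedge^{k}V$. Third, the resulting dimensional pattern is reflection-symmetric about the middle grade, producing the described \emph{guemil} shape.

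The first claim follows immediately from the defining relation $e_iE_j + E_jE_i = 2g_{ij}\mathbf{1}$: orthogonality makes the generators pairwise anticommute, so for strictly increasing indices the Clifford monomial $e_{i_1}\cdots e_{i_k}$ coincides with its totally antisymmetrised version and therefore lies in $\bigwedge^{k}V$ as defined in the excerpt. For the second claim, my plan is to combine the already-stated facts $\dim_{\R}\cl(\Phi)=2^n$ and $\dim_{\R}\bigwedge^{k}V=\binom{n}{k}$ with the multilinear-antisymmetric nature of the wedge: writing the $v_i$ in terms of an orthonormal basis and observing that a linear change of basis in $V$ induces an invertible linear change on $\bigwedge^{k}V$, one concludes that the $\binom{n}{k}$ wedges are independent and span $\bigwedge^{k}V$. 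Summing over $k$ yields $\sum_{k=0}^{n}\binom{n}{k}=2^{n}$, matching the total dimension, so the union is a basis meeting each grade transversally. The third claim is then a one-line consequence of $\binom{n}{k}=\binom{n}{n-k}$.

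The principal subtlety, and essentially the whole reason the observation insists on using wedges in the general case, is bookkeeping of grades. For a non-orthogonal basis the Clifford product $v_{i_1}\cdots v_{i_k}$ is generally \emph{not} a homogeneous $k$-vector, since the anticommutators pick up non-zero $g(v_i,v_j)$ contributions that mix lower grades. One must therefore carefully distinguish the Clifford product from the wedge, and either define $v_{i_1}\wedge\cdots\wedge v_{i_k}$ abstractly as an element of $\bigwedge^{k}V$ transported into $\cl(\Phi)$ via the vector-space isomorphism $\cl(\Phi)\cong\bigwedge V$, or equivalently as the projection $\langle v_{i_1}\cdots v_{i_k}\rangle_k$ using the grade-projectors of \eqref{multiv}. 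Once this identification is made, the rest of the argument reduces to elementary multilinear algebra; the geometric ``guemil'' symmetry is then a cosmetic rephrasing of the classical binomial palindrome.
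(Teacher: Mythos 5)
The paper states this as an unproved \emph{Observation}, so there is no official argument to compare against; your write-up is a correct filling-in of the details, and it uses exactly the ingredients the paper has already put on the table (the defining relations, $\dim\cl(\Phi)=2^{n}$, $\dim\bigwedge^{k}V=\binom{n}{k}$, the vector-space isomorphism $\cl(\Phi)\cong\bigwedge V$, and the grade projectors of Eq.~\eqref{multiv}). You also correctly isolate the one point of genuine content: for a non-orthogonal basis the Clifford monomial $v_{i_1}\cdots v_{i_k}$ is not homogeneous, which is precisely why the observation switches to wedges, and your two proposed definitions of $v_{i_1}\wedge\cdots\wedge v_{i_k}$ (transport from $\bigwedge V$, or the projection $\langle v_{i_1}\cdots v_{i_k}\rangle_{k}$) are the standard equivalent ones. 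The only cosmetic remark is that, with the paper's own definition of $\bigwedge^{k}V$ as spans of Clifford products of mutually orthogonal vectors, your first claim is immediate by definition rather than needing the antisymmetrisation argument.
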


We define the \emph{grade involution} $\alpha$ on a basis element  $e_{i_{1}}\dots{e}_{i_{k}}$ as: $\alpha({e_{i_{1}}\dots{e}_{i_{k}}})=(-1)^{k}e_{i_{1}}\dots{e}_{i_{k}}$, and extend it to any element as an algebra homomorphism on $\cl(\Phi)$.

It can be proven that this morphism induces a $\mathbb{Z}_{2}$ grading in the algebra, splitting it into $\cl(\Phi)=\cl(\Phi)^{0}\oplus\cl(\Phi)^{1}$, where $\cl(\Phi)^{i}=\{x\in\cl(\Phi):\alpha(x)=(-1)^{i}x\}$. It is true that $\cl(\Phi)^{0}$ is a subalgebra of $\cl(\Phi)$ while $\cl(\Phi)^{1}$ is not. Furthermore, it happens that
\begin{equation}\nonumber
\cl(\Phi)^{0}=\bigoplus_{i \ \text{even}}{\bigwedge}^{i}V \quad \text{and} \quad \cl(\Phi)^{1}=\bigoplus_{i \ \text{odd}}{\bigwedge}^{i}V .
\end{equation}
We define \emph{reversion}, $t$, on a basis element $e_{i_{1}}\dots{e}_{1_{k}}$ as $t(e_{i_{1}}\dots{e}_{i_{k}})=e_{i_{k}}\dots{e}_{i_{1}}$, and extend it as an algebra anti-morphism (meaning $t(ab)=t(b)t(a)$, details can be found in \cite{Gallier, VdR}). Reversion on an element $x$ will also be noted as $\hat{x}$.

Using the previous functions we define the \emph{Clifford conjugation} on any CA element $x$ as $\overline{x}=(t\circ\alpha)(x)=(\alpha\circ{t})(x)$. This is an algebra anti-morphism and in a basis element $e_{i_{1}}\dots{e}_{i_{k}}$ it can be seen to be $\overline{e_{i_{1}}\dots{e}_{i_{k}}}=(-1)^{k}e_{i_{k}}\dots{e}_{i_{1}}$.

Lastly, we define the \emph{norm} $N(x)$ of an element $x$ in the CA as $N(x):=x\overline{x}=\overline{x}{x}$. An important feature of this function is that $N(v)=-\Phi(v)\mathbf{1}$ for any $v\in{V}$.

\subsection{The Clifford-Lipschitz and the twisted Clifford-Lipschitz groups}\label{CLgroup}

The group of units of the Clifford algebra, $\cl(\Phi)^{*}$, contains some special subgroups that are related to the isometries of the quadratic space $(V,\Phi)$. In this section we will briefly define these groups and introduce some concepts related to them. We are not going to prove most of the results stated in this section, but detailed calculations can be found in \cite{VdR}.

The \emph{Clifford-Lipschitz group} or simply the \emph{Clifford group}$, \Gamma(\Phi)$, is defined as follows:
\begin{equation}
\Gamma(\Phi)=\{x\in\cl(\Phi)^{*}: xvx^{-1}\in{V}, \ \forall \ v\in{V}\}.
\end{equation}

As it is explained in \cite{VdR} the Lie algebra of this Lie group is the set:
\begin{equation}
\gamma(\Phi)={\bigwedge}^{2}V\oplus{\mathcal{Z}(\cl(\Phi))},
\end{equation}
where $\mathcal{Z}(\cl(\Phi))$ is the center of the CA and the bracket of the Lie algebra bracket is the commutator $[x,y]:=xy-yx$.We will denote the Clifford-Lipschitz group of the bilinear form of signature $(p,q)$ by $\Gamma_{p,q}$.

There is a natural action of the Clifford group on the vector space $V=\bigwedge_{1}$, $\opn{Ad}:\Gamma(\Phi)\rightarrow\opn{Aut}(V)$, called the \emph{adjoint action}, given by:
\begin{equation}
\opn{Ad}_{x}(v)=xvx^{-1}.
\end{equation}

It can be seen that $\opn{Ad}(\Gamma(\Phi))=\opn{O}(\Phi)$ if $p+q$ is even, and $\opn{Ad}(\Gamma(\Phi))=\opn{SO}(\Phi)$ if $p+q$ is odd. In any case we have that $\opn{ker(Ad)}=\mathcal{Z}^{*}(\cl(\Phi)):=\mathcal{Z}(\cl(\Phi))\cap\cl^{*}(\Phi)$.

In order to generate the whole orthogonal group for $p+q$ odd, it is possible to define the \emph{twisted adjoint action}, and the \emph{twisted Clifford-Lipschitz group} or \emph{twisted Clifford group}. The twisted Clifford-Lipschitz group is given by:
\begin{equation}
\hat\Gamma(\Phi)=\{x\in\cl(\Phi)^{*}: \alpha(x)vx^{-1}\in{V}, \ \forall \ v\in{V}\}.
\end{equation}

The \emph{twisted adjoint action} is defined as the function $\hat{\opn{Ad}}:\hat{\Gamma}(\Phi)\rightarrow{\opn{Aut}(V)}$, where for any $x\in{\hat\Gamma}(\Phi)$ and $v\in{V}$ we have:
\begin{equation}
\hat{\opn{Ad}}_{x}(v)=\alpha(x)vx^{-1}.
\end{equation}

For any Clifford algebra it is true that $\hat{\opn{Ad}}(\hat{\Gamma}(\Phi))=\opn{O}(\Phi)$ and that $\ker(\hat{\opn{Ad}})=\mathbb{R}^{*}\mathbf{1}$.

In general we have that $\hat{\Gamma}(\Phi)\subseteq{\Gamma(\Phi)}$, and that $\hat{\Gamma}(\Phi)=\Gamma(\Phi)$ only if the dimension of the vector space $V$ is even\cite{crumey}.

Since the twisted adjoint action seem to work better regarding the orthogonal groups, we may think that it would be possible to work only with this action and forget about $\opn{Ad}$. However, when the Clifford algebra is interpreted as the algebra of linear transformations on the space of spinors, it is only the adjoint action the one that reproduces the change of a matrix $x\in{\cl(\Phi)}$ when the spinors are transformed by a linear transformation.

\subsection{The Pin and Spin groups}

The mapping $\hat{\opn{Ad}}$ defined over the twisted Clifford-Lipschitz group can be restricted to a certain subgroup in a way in which this function remains surjective. 

The \emph{Pin group} of the CA $\cl(\Phi)$ is defined as the set:
\begin{equation}
\text{Pin}(\Phi)=\{x\in{\hat{\Gamma}(\Phi) \ |  \ N(x)=\pm\mathbf{1}}\}.
\end{equation}
The \emph{Spin group} of the CA $\cl(\Phi)$ is the subgroup of the Pin group, defined as
\begin{equation}
\text{Spin}(\Phi)=\opn{Pin}(\Phi)\cap\cl^{0}(\Phi).
\end{equation}
We will refer to the Pin group of signature $(p,q)$ by $\text{Pin}(p,q)$, and to the corresponding Spin group by $\opn{Spin}(p,q)$.

\begin{theo}\label{teotwAd}
The following statements are true:
\begin{itemize}
\item The restriction of the twisted adjoint action to the Spin group \linebreak{$\hat{\opn{Ad}}:\opn{Spin}(\Phi)\rightarrow\opn{SO}(\Phi)$} is a surjective group morphism with $\ker(\hat{\opn{Ad}})=\{-1,+1\}=\mathbb{Z}_{2}$.
\item The restriction of $\hat{\opn{Ad}}$ to $\opn{Pin}(\Phi)$ is a surjective morphism \linebreak{$\hat{\opn{Ad}}:\opn{Pin}(\Phi)\rightarrow\opn{O}(\Phi)$} with $\ker(\hat{\opn{Ad}})=\mathbb{Z}_{2}$.
\end{itemize}
It is said that $\opn{Spin}(\Phi)$ is the double cover of $\opn{SO}(\Phi)$, while $\opn{Pin}(\Phi)$ is the double cover of $\opn{O}(\Phi)$.
\end{theo}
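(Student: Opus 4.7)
The plan is to reduce both statements to the Cartan–Dieudonné theorem together with the two facts already recorded in section \ref{CLgroup}: that $\hat{\opn{Ad}}(\hat{\Gamma}(\Phi))=\opn{O}(\Phi)$ and that $\ker(\hat{\opn{Ad}})=\R^{*}\mathbf{1}$ on the full twisted Clifford-Lipschitz group. The strategy is to exhibit enough reflections already inside $\opn{Pin}(\Phi)$ to generate $\opn{O}(\Phi)$, then use the Pin/Spin normalization $N=\pm\mathbf{1}$ to collapse the kernel from $\R^{*}\mathbf{1}$ down to $\{\pm\mathbf{1}\}$.

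First I would verify that any non-null vector $v\in V$ lies in $\hat{\Gamma}(\Phi)$ and that $\hat{\opn{Ad}}_{v}$ acts on $V$ as the reflection across the hyperplane $v^{\perp}$. Since $v^{2}=\Phi(v)\mathbf{1}$, one has $v^{-1}=v/\Phi(v)$, and the defining relation $vw+wv=2g(v,w)\mathbf{1}$ gives
\begin{equation}\nonumber
\hat{\opn{Ad}}_{v}(w)=\alpha(v)wv^{-1}=-vwv^{-1}=w-2\tfrac{g(v,w)}{\Phi(v)}v,
\end{equation}
which is precisely the usual reflection formula. Because $N(v)=-\Phi(v)\mathbf{1}$ is a non-zero scalar for non-null $v$, the rescaled vector $v/\sqrt{|\Phi(v)|}$ satisfies $N=\pm\mathbf{1}$ and therefore belongs to $\opn{Pin}(\Phi)$.

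For surjectivity I invoke Cartan–Dieudonné, which writes any $R\in\opn{O}(\Phi)$ as a finite product $R=r_{v_{1}}\cdots r_{v_{k}}$ of reflections through non-null hyperplanes. After normalizing each $v_{i}$ as above, the element $x=v_{1}\cdots v_{k}$ lies in $\opn{Pin}(\Phi)$ because $N$ is multiplicative on $\hat{\Gamma}(\Phi)$ (using that $\overline{\,\cdot\,}$ is an anti-morphism and that $N$ takes central values on $\hat{\Gamma}(\Phi)$), and it satisfies $\hat{\opn{Ad}}_{x}=R$. When $R\in\opn{SO}(\Phi)$ the determinant forces $k$ to be even, so $x\in\cl^{0}(\Phi)\cap\opn{Pin}(\Phi)=\opn{Spin}(\Phi)$. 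The kernel is then immediate: intersecting $\ker(\hat{\opn{Ad}})=\R^{*}\mathbf{1}$ with $\opn{Pin}(\Phi)$ forces $N(\lambda\mathbf{1})=\lambda^{2}\mathbf{1}=\pm\mathbf{1}$ with $\lambda\in\R$, hence $\lambda\in\{-1,+1\}$; the same set sits inside $\cl^{0}(\Phi)$, so the kernel of the Spin restriction is also $\mathbb{Z}_{2}$.

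The only substantive ingredient is Cartan–Dieudonné itself, which I would cite from \cite{Gallier} or \cite{VdR} rather than reprove; the Clifford-algebraic bookkeeping reduces to the reflection identity above, the multiplicativity of $N$ on $\hat{\Gamma}(\Phi)$, and the normalization of individual vectors. The one point deserving genuine care is the claim that $N$ is $\R\mathbf{1}$-valued on the whole of $\hat{\Gamma}(\Phi)$—a standard but not wholly trivial fact about the Clifford-Lipschitz group—which is precisely what allows the separate rescalings $v_{i}\mapsto v_{i}/\sqrt{|\Phi(v_{i})|}$ to assemble into a single element of $\opn{Pin}(\Phi)$.
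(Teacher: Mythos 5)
Your argument is sound, and it is worth noting at the outset that the paper itself offers no proof of this theorem at all: it is quoted as a known result with the calculations deferred to \cite{VdR}. So there is no internal proof to compare against; what you have written is essentially the canonical Cartan--Dieudonn\'e argument. Under the paper's conventions ($v^{2}=\Phi(v)\mathbf{1}$, $N(v)=-\Phi(v)\mathbf{1}$), your reflection identity, the normalization $v\mapsto v/\sqrt{|\Phi(v)|}$, the multiplicativity of $N$ on products of vectors, and the kernel computation $\R^{*}\mathbf{1}\cap\opn{Pin}(\Phi)=\{\pm\mathbf{1}\}$ are all correct. Your instinct to flag the scalar-valuedness of $N$ on all of $\hat{\Gamma}(\Phi)$ is reasonable, though for the surjectivity argument you only ever need $N(v_{i})$ to be a scalar for individual non-null vectors, which is immediate.

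The one step you do not address is the containment $\hat{\opn{Ad}}(\opn{Spin}(\Phi))\subseteq\opn{SO}(\Phi)$: you show that every element of $\opn{SO}(\Phi)$ is hit by an even product of normalized vectors, but for the first bullet to assert a well-defined surjection \emph{onto} $\opn{SO}(\Phi)$ you must also rule out an element of $\opn{Pin}(\Phi)\cap\cl^{0}(\Phi)$ being sent to an orientation-reversing isometry. The clean way to close this is the structure fact that every element of $\hat{\Gamma}(\Phi)$ is a nonzero scalar times a product of non-isotropic vectors --- the same fact the paper itself invokes in a footnote inside the proof of Theorem \ref{teoAd} --- after which the parity of that product is detected by $\alpha$, so elements of $\cl^{0}(\Phi)$ correspond to even products of reflections and hence land in $\opn{SO}(\Phi)$. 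With that one line added the proof is complete. Finally, be sure to cite the version of Cartan--Dieudonn\'e valid for indefinite real forms, i.e.\ with reflections taken only in non-isotropic hyperplanes; \cite{Gallier} states it in that generality.
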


Let us note that in the definition of the Spin group the twisted Clifford group can be replaced by the Clifford group and the twisted adjoint action by the adjoint action. For the Pin group, this change in the definition is only possible in even dimensions, where $\Gamma(\Phi)=\hat{\Gamma}(\Phi)$.

For odd-dimensional spaces we have that the Clifford group only covers $\opn{SO}(p,q)$, and so does $\opn{Pin}(p,q)$ (with the adjoint action). In this case, the Pin group defined as above, together with the adjoint action, wouldn't add any extra information to the Spin group: it only enlarges the kernel of the adjoint action. We have the following theorem:

\begin{theo}\label{teoAd}
The following statements hold:
\begin{itemize}
\item The restriction of the adjoint action to the Spin group \linebreak{${\opn{Ad}}:\opn{Spin}(\Phi)\rightarrow\opn{SO}(\Phi)$} is a surjective group morphism with $\ker({\opn{Ad}})=\mathbb{Z}_{2}$.
\item For even dimensional spaces, the restriction of ${\opn{Ad}}$ to $\opn{Pin}(\Phi)$ is a surjective morphism \linebreak{${\opn{Ad}}:\opn{Pin}(\Phi)\rightarrow\opn{O}(\Phi)$} with $\ker(\opn{Ad})=\mathbb{Z}_{2}$.
\item For odd dimensional spaces, the restriction of ${\opn{Ad}}$ to $\opn{Pin}(\Phi)$ is a surjective morphism \linebreak{${\opn{Ad}}:\opn{Pin}(\Phi)\rightarrow\opn{SO}(\Phi)$} with 
\begin{equation}
\ker(\opn{Ad})=
\begin{cases}
\mathbb{Z}_{2}\times\mathbb{Z}_{2}\ \ \ &\text{if }\ p-q\equiv1\mod{4},\\ 
\mathbb{Z}_{4} \ \ \ &\text{if }\ p-q\equiv3\mod{4}. \end{cases}
\end{equation}
\end{itemize}
\end{theo}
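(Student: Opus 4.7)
The plan is to reduce each bullet to the corresponding statement in Theorem~\ref{teotwAd} by controlling the relation between $\opn{Ad}_{x}$ and $\hat{\opn{Ad}}_{x}$, which on homogeneous elements differ only by the sign $\alpha(x)/x = \pm 1$. The first bullet is then immediate: every $x \in \opn{Spin}(\Phi) \subseteq \cl(\Phi)^{0}$ satisfies $\alpha(x) = x$, so $\opn{Ad}_{x} = \hat{\opn{Ad}}_{x}$ on $V$, and the statement follows directly from the Spin part of Theorem~\ref{teotwAd}.

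For the second bullet I would invoke the facts recalled in Section~\ref{CLgroup}, namely $\Gamma(\Phi) = \hat\Gamma(\Phi)$ and $\opn{Ad}(\Gamma(\Phi)) = \opn{O}(\Phi)$, both valid in even dimension. For surjectivity, given $\sigma \in \opn{O}(\Phi)$ choose $x \in \Gamma(\Phi)$ with $\opn{Ad}_{x} = \sigma$; since $N(x) \in \mathbb{R}^{*}$ for Clifford group elements and scalars lie in $\ker\opn{Ad}$, the rescaled $y = x/\sqrt{|N(x)|}$ lies in $\opn{Pin}(\Phi)$ and still realizes $\sigma$. For the kernel, $\opn{Ad}_{x} = \opn{Id}$ forces $x \in \mathcal{Z}^{*}(\cl(\Phi)) = \mathbb{R}^{*}\mathbf{1}$ (the center is $\mathbb{R}$ in even dimension), and combined with $N(x) = \pm\mathbf{1}$ this leaves only $x = \pm\mathbf{1}$.

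The third bullet is the substantive one. Any element of $\opn{Pin}(\Phi)$ is, up to a sign, a product of unit vectors, hence homogeneous, so $\opn{Pin}(\Phi) = \opn{Spin}(\Phi) \cup \opn{Pin}^{1}(\Phi)$ with $\opn{Pin}^{1}(\Phi) \subseteq \cl(\Phi)^{1}$; on the odd part $\alpha(x) = -x$ gives $\opn{Ad}_{x} = -\hat{\opn{Ad}}_{x}$ on $V$. Surjectivity onto $\opn{SO}(\Phi)$ is immediate from the first bullet. For the kernel, even $x$ with $\opn{Ad}_{x} = \opn{Id}$ yield $\{\pm\mathbf{1}\}$, while odd $x$ must satisfy $\hat{\opn{Ad}}_{x} = -\opn{Id}_{V}$. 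I would exhibit such an element as the pseudoscalar $\omega = e_{1}\cdots e_{n}$: in odd dimension $\omega$ is central, so $\opn{Ad}_{\omega} = \opn{Id}$, and $\alpha(\omega) = -\omega$ gives $\hat{\opn{Ad}}_{\omega} = -\opn{Id}$; a direct computation shows $N(\omega) = \pm\mathbf{1}$, so $\omega \in \opn{Pin}(\Phi)$. Because $\ker\hat{\opn{Ad}} = \{\pm\mathbf{1}\}$, the odd preimages of $-\opn{Id}$ form the single coset $\{\pm\omega\}$, and hence $\ker\opn{Ad} = \{\pm\mathbf{1},\pm\omega\}$. Identifying this order-$4$ abelian group reduces to the sign of $\omega^{2} = (-1)^{n(n-1)/2 + q}$: a parity analysis with $n = p+q$ odd yields $\omega^{2} = +\mathbf{1}$ iff $p - q \equiv 1 \pmod 4$, in which case every nontrivial element has order $2$ and the kernel is $\mathbb{Z}_{2}\times\mathbb{Z}_{2}$; and $\omega^{2} = -\mathbf{1}$ iff $p - q \equiv 3 \pmod 4$, in which case $\omega$ has order $4$ and the kernel is $\mathbb{Z}_{4}$. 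I expect this final modular bookkeeping — matching the two congruence classes to the two group structures — to be the one technical step that requires care.
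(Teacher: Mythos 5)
Your proof is correct and follows essentially the same route as the paper: reduce the Spin statement to Theorem~\ref{teotwAd} via $\alpha(x)=x$ on the even part, identify $\ker(\opn{Ad}|_{\opn{Pin}})$ with the central units $\{\pm\mathbf{1},\pm I\}$ of norm $\pm\mathbf{1}$, and read off $\mathbb{Z}_{2}\times\mathbb{Z}_{2}$ versus $\mathbb{Z}_{4}$ from the sign of $I^{2}$, i.e.\ from $p-q \bmod 4$. The only cosmetic difference is that you locate the odd kernel elements as the coset $\{\pm\omega\}$ over $-\opn{Id}_{V}$ under $\hat{\opn{Ad}}$, whereas the paper parametrizes a general central element $a+Ib$ and solves $z\overline{z}=\pm\mathbf{1}$ directly; both yield the same set.
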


\begin{proof}
$\opn{Ad}$ coincides with $\hat{\opn{Ad}}$ on $\opn{Spin}(\Phi)$, hence the first statement is equivalent to Thm. \ref{teotwAd}. For the second and third statements of the theorem, we know that $\ker{\opn{Ad}|_{\Gamma(\Phi)}}=\mathcal{Z}^{*}(\cl(\Phi))$, hence $\ker{\opn{Ad}|_{\opn{Pin}(\Phi)}}$ is the set of elements that lie in $\mathcal{Z}^{*}(\cl(\Phi))\cap\opn{Pin}(\Phi)$.

Any element in set $\mathcal{Z}^{*}(\cl(\Phi))$ can be written as $z=a+Ib$ with $a,b\in{\mathbb{R}}$, and $I$ the pseudoscalar of the algebra. We have that $\overline{z}=a-bI$ or $\overline{z}=a+Ib$, according to $\overline{I}=\pm{I}$. However, since $\alpha(x)=\pm{x}$ for every element in $\opn{Pin}(\Phi)$\footnote{This can be seen easily because any element $x$ in the Pin group is the product of a finite number of non-isotropic vectors. For a product of an even number of vectors $\alpha(x)=x$ and for an odd number of vectors $\alpha(x)=-x$.}, $z$ is rather equal to $a$ or equal to $b{I}$, and hence $z\overline{z}=a^{2}$ or $z\overline{z}=\pm{I}^{2}b^{2}$. Consequently, the solutions to $z\overline{z}=\pm{1}$ are always $z=\pm{1}$ and $z=\pm{I}$, this is:
\begin{equation}
\ker{\opn{Ad}|_{\opn{Pin}}}=\{\pm{\mathbf{1}},\pm{I}\}.
\end{equation} 
If $I^{2}=1$, equivalently $p-q\equiv{1}\mod4$, this set has the group structure of $\mathbb{Z}_{2}\times\mathbb{Z}_{2}$. If $I^{2}=-1$, which is equivalent to $p-q\equiv{3}\mod4$, the same set has the group structure of $\mathbb{Z}_{4}$.
\end{proof}

We are going to define the reduced Pin group\cite{VdR} as:
\begin{equation}
\tensor{\opn{Pin}}{^{\wedge}_{+}}(\Phi)=\{x\in\hat{\Gamma}(\Phi) \ | \ x\bar{x}=\mathbf{1}\}.
\end{equation}
It can be proved that this group (using the twisted adjoint action) generates the transformations that preserve the orientation of the subspace $\mathbb{R}^{p,0}$: for instance, in $\mathbb{R}^{1,3}$ these are the transformations preserving time orientation, $\opn{O}^{\uparrow}(1,3)$.

\section{The real Clifford algebra $\cl_{2,3}(\mathbb{R})$}\label{seccl23}

We denote by $\R^{2,3}$ the five dimensional vector space $\mathbb{R}^{5}$ together with the metric of signature $(+,-,-,-,+)$. A point in this space is labeled by coordinates $(x^{0},x^{1},x^{2},x^{3},x^{4})$. To represent any of the coordinates we use uppercase Latin scripts $x^{A}$, and hence $A$ ranges from $0$ to $4$. The matrix of the bilinear form $g$ is given in this orthonormal basis by:
\begin{equation}
g_{AB}=
\begin{cases}
1 & \text{if $A=B=0$ or $A=B=4$},\\
-1 & \text{if $A=B\in\{1,2,3\}$},\\
0 & \text{otherwise}.
\end{cases}
\end{equation}
The structure of $k$-vectors is depicted by means of the standard basis in Figure \ref{cl23}.
\begin{figure}[h]
\includegraphics[width=\textwidth]{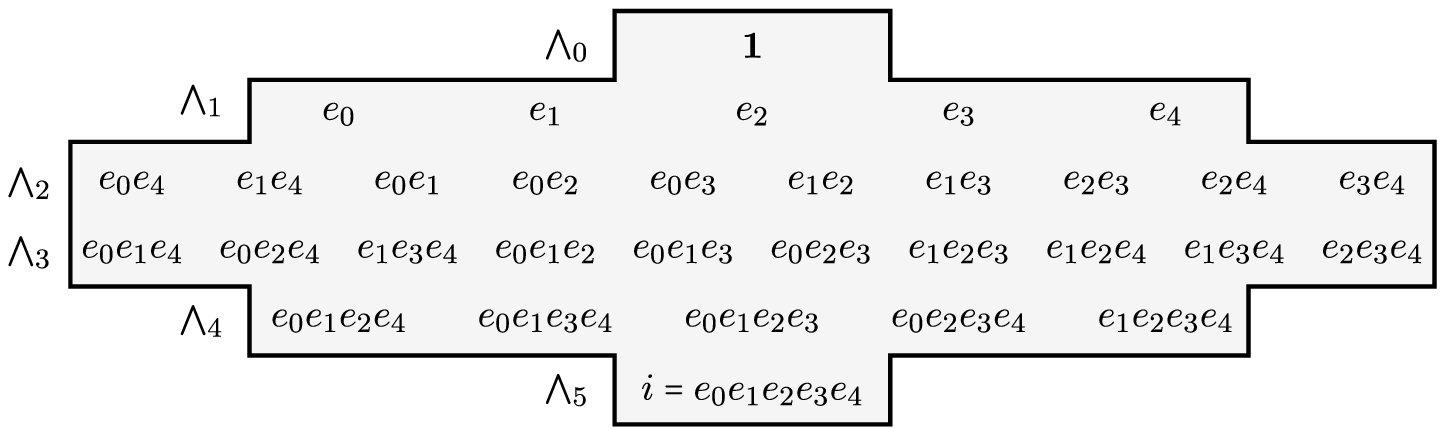}
\caption{Structure of $k$-vectors for $\cldostres$}\label{cl23}
\end{figure}

According to the Clifford algebra classification, this algebra is isomorphic to the algebra of $4\times{4}$ complex matrices, $\mathcal{M}(4,\mathbb{C})$. 

Observe that we have denoted the pseudoscalar of $\cldostresr$ by the symbol $i$. We find motivation for this notation in the fact that the pseudoscalar is a central element that squares to $-\mathbf{1}$, thus it is possible to identify it with an imaginary unit. Indeed, the center of this algebra is $\mathcal{Z}(\cl_{2,3}(\mathbb{R}))={\bigwedge}_{0}\oplus{\bigwedge}_{5}=\mathbb{R}\mathbf{1}\oplus\mathbb{R}i$ is isomorphic to ${\mathbb{C}}$. This is an important feature that allows us to reproduce complex numbers in the real algebra. From now on, the symbol $i$ will be reserved for the pseudoscalar of $\cldostresr$.

\section{The complex Clifford algebra $\cl_{1,3}(\mathbb{C})$ and the isomorphism $\cl_{2,3}(\mathbb{R})\cong\cl_{1,3}(\mathbb{C})$}\label{seccomplex}

In the theory of special relativity, spacetime is modeled as \emph{the Minkowski spacetime} $\R^{1,3}$. This is, the vector space $\mathbb{R}^{4}$, with the metric of signature $(+,-,-,-)$. A point is labeled by coordinates $(x^{0},x^{1},x^{2},x^{3})$, and we will use lowercase Greek scripts to refer to them arbitrarily: $x^\mu$ with $\mu$ taking values in $\{0,1,2,3\}$. Additionally, lowercase Latin scripts (e.g. $i,j,k$), will take values in $\{1,2,3\}$, thus $x^i$ will represent any of the space coordinates. The bilinear form in an orthogonal basis is written as: 
\begin{equation}
\eta_{\mu\nu}=
  \begin{cases}
    1 &  \ \mu=\nu=0, \\
   -1 & \ \mu=\nu=j ; \ j \in \{1,2,3\},\\
    0 & \mu\neq\nu.
  \end{cases}
\end{equation}
The real CA corresponding to this spacetime, $\cl_{1,3}(\mathbb{R})$, will be the algebra generated by elements $\{\gamma_{\mu}:\mu\in\{0,...,3\}\}$ with the relation:
\begin{equation}\label{CArelation}
\gamma_{\mu}\gamma_{\nu}+\gamma_{\nu}\gamma_{\mu}=2\eta_{\mu\nu}\mathbf{1}.
\end{equation}

According to the well known classification of CA, the real algebra for the Minkowski spacetime with this signature is isomorphic to the algebra of ${2}\times{2}$ matrices with entries in the quaternions.

The Dirac's theory of spinors is derived partially from classical quantum mechanics, which postulates the existence of a complex Hilbert space of physical states, hence we need to use the complex CA, $\cl_{1,3}(\mathbb{C})$. This algebra is well known to be isomorphic to the complex algebra of ${4}\times{4}$ matrices with complex entries, $\mathcal{M}(4,\mathbb{C})$. There are infinite matrix representations for the generators as matrices in $\mathcal{M}(4,\mathbb{C})$, but the most popular are perhaps the Dirac and Weyl representations.

In this article we will pay special attention to the following known fact: the CA of $\mathbb{R}^{1,3}$ can be ``complexified'' in an alternative way \cite{Lounesto}, which allow us to keep working with real Clifford algebras. The complexification is accomplished by adding an extra time-like dimension, $x^{4}$, to the Minkowski spacetime and taking the real CA of the 5D spacetime, $\mathbb{R}^{2,3}$, defined in section \ref{seccl23}. This is possible because the following isomorphisms hold:
\begin{equation}\label{iso}
\cl_{2,3}(\mathbb{R})\cong\mathcal{M}(4,\mathbb{C})\cong\cl_{1,3}(\mathbb{C}).
\end{equation}
The ismorphism $\clunotresc\cong\cldostresr$ is not unique; however, for any possible isomorphism the imaginary unit of $\clunotresc$ must be identified rather with the pseudoscalar $i=e_{0}e_{1}e_{2}e_{3}e_{4}$ of $\cldostresr$ or with its opposite, $-i$. This is possible because of two facts: on the one hand the pseudoscalar squares to $-\mathbf{1}$ and on the other hand it lies in the center of the algebra, $\mathcal{Z}\text{(Cl}_{2,3})$, this is, it commutes with every element in the CA (a necessary property for scalars). Indeed we have that $\mathcal{Z}(\cl_{2,3})\cong\mathbb{C}$.

It is important to be careful since, when working in $\cl_{2,3}(\mathbb{R})$, we shall use the name $i$, as referring to the pseudoscalar, but we are by no means complexifying $\cl_{2,3}(\mathbb{R})$. Reciprocally, when we complexify the theory in the usual way, this fifth dimension emerges naturally as the matrix element $\gamma_{5}$, which squares to $1$ and is associated with the chirality of the Dirac spinor fields. Since the isomorphism \eqref{iso} holds, the representations of $\cl_{1,3}(\mathbb{C})$ and $\cl_{2,3}(\mathbb{R})$ are equivalent, and hence the spaces of spinors are isomorphic.

In a recent article \cite{arcodiaparantipar}, we explored the physical consequences of considering this dimension as physically real, and interpreted the Dirac particles and antiparticles in this context. In this article we will further explore the mathematical structure of the spinor spaces, the group of automorphisms, and the embeddings of the $\opn{Pin}(1,3)$ and $\opn{Spin}(1,3)$ groups into $\opn{Spin}(2,3)$.

We turn now to the explicit construction of the isomorphism $\clunotresc\tilde{\rightarrow}\cldostresr$, and in order to do so we are going to consider a particular class of isomorphisms. First, note that any embedding of the vector space $\R^{1,3}$ into $\R^{2,3}$ induces an embedding of the real Clifford algebras $\clunotresr\hookrightarrow\cldostresr$. If we additionally identify the imaginary unit $I$ of the complex algebra $\clunotresc$ with rather the pseudoscalar $i=e_{0}e_{1}e_{2}e_{3}e_{4}$ or its opposite $-i$, then a full isomorphism $\clunotresc\tilde{\rightarrow}\cldostresr$ is determined.

We are going to consider two embeddings of this kind, namely the trivial embedding and the twisted embedding. \emph{The trivial embedding} consists on simply identifying the first four coordinates of $\R^{2,3}$ with the coordinates of $\R^{1,3}$, and the imaginary unit with the pseudoscalar of $\cldostresr$. This is:
\begin{equation}
\begin{gathered}
\gamma_{\mu}\mapsto{e}_{\mu},\\
I\mapsto{i}.
\end{gathered}
\end{equation}
By abuse of notation we will denote the image of $\R^{1,3}$ under this map also by $\R^{1,3}$. Furthermore, since $\bigwedge_{k}$ is mapped injectively to $\bigwedge_{k}\subseteq\cldostresr$, we will refer to its image also by $\bigwedge_{k}$, as long as there's no confusion.

\emph{The twisted embedding} consists on identifying the vectors in $\R^{1,3}$ with some particular $3$-vectors in $\cldostresr$. This embedding appeared in our work \cite{arcodiaparantipar}, when the massive $4D$ Dirac equation was obtained from a massless Dirac equation in $5D$. The isomorphism obtained in this way consists on:
\begin{equation}
\begin{split}
\gamma_{\mu}&\mapsto\tilde{e}_{\mu}:=-ie_{4}e_{\mu},\\
I&\mapsto{i}.
\end{split}
\end{equation}
more specifically we have:
\begin{equation}
\tilde{e}_{0}=e_{1}e_{2}e_{3}\ ; \ \tilde{e}_{1}=e_{0}e_{2}e_{3}\ ; \ \tilde{e}_{2}=e_{1}e_{0}e_{3}\ ; \ \tilde{e}_{3}=e_{0}e_{1}e_{2} \ .
\end{equation}
We will refer to the image of $\R^{1,3}$  under this embedding by $\widetilde{\R^{1,3}}$, and more generally we will call $\tilde{\bigwedge_{k}}$ the image of $\bigwedge_{k}$, where $k\in\{0,1,2,3,4\}$. Note that since $\tilde{e}_{\mu}\tilde{e}_{\nu}={e}_{\mu}{e}_{\nu}$, then the set $\{\tilde{e}_{\mu}\}$ generates the same algebra that $\{{e}_{\mu}\}$. Note also that although $2k$-vectors are mapped to $2k-$vectors in 5D, vectors and $3-$vectors are swapped. All this facts are pictured in Figure \ref{embebimiento}. The shaded area in this diagram represent the image of $\clunotresr$ under any of the embeddings.

\begin{figure}[h!]
\includegraphics[width=\textwidth]{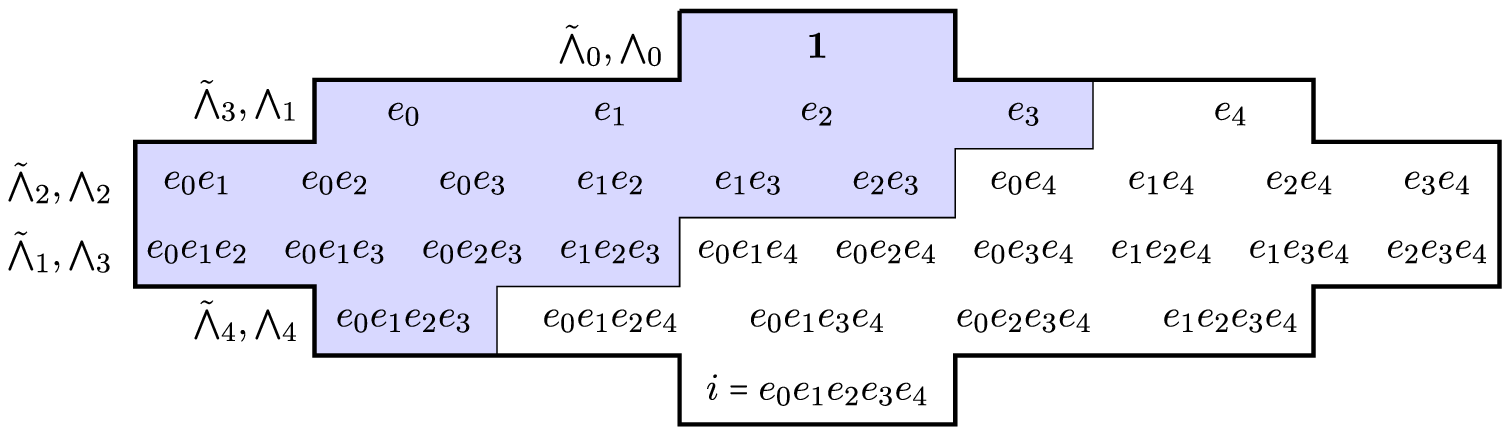}
\caption{The embedding $\cl_{1,3}(\mathbb{R})\hookrightarrow{\cl_{2,3}(\mathbb{R})}$}\label{embebimiento}
\end{figure}

\subsection{The group $\opn{Pin}(1,3)$ and its inclusion in $\opn{Spin}(2,3)$}

Since our base space has five dimensions, by means of Thm. \ref{teoAd}, the adjoint action over the Clifford-Lipschitz group only covers the special orthogonal group $\opn{SO}(2,3)$, and hence no orientation changing linear transformation can be achieved by means of the Clifford-Lipschitz group. However, it is possible to reproduce all the orthogonal transformations of the subspace $\mathbb{R}^{1,3}$ generated by the first four coordinates. Indeed, let's note that by means of the trivial embedding, the vector space $\mathbb{R}^{2,3}$ can be decomposed as:
\begin{equation}\label{inclusionV}
\mathbb{R}^{2,3}=\mathbb{R}^{1,3}\oplus\mathbb{R}e_{4},
\end{equation}
where the bilinear form $g$ in $\mathbb{R}^{2,3}$ can be defined for a pair of elements $v+\alpha{e}_{4}$ and $u+\beta{e}_{4}$, where $u,v\in{\mathbb{R}^{1,3}}$ as:
\begin{equation}
g(v+\alpha{e}_{4},u+\beta{e}_{4})=\eta{(v,u)}+\alpha\beta,
\end{equation}
with $\eta$ the bilinear form of signature $(1,3)$.

Let $L$ be an element in $\opn{O}(1,3)$, we are going to extend it to a linear transformation in $\mathbb{R}^{2,3}$, as follows:
\begin{equation}
\tilde{L}(v+\alpha{e}_{4})=Lv+\opn{det}{(L)}\alpha{e}_{4}.
\end{equation}
It can be checked by direct computation that $\tilde{L}\in{\opn{SO}(2,3)}$: this is $\tilde{L}$ preserves the bilinear form $g$ and, independently of the value of $\opn{det}{L}$, the determinant of $\tilde{L}$ is always $1$. In this sense, we have the inclusion:
\begin{equation}
\ell:\opn{O(1,3)}\hookrightarrow\opn{SO}(2,3).
\end{equation}
Furthermore, note that under the splitting given by Eq. \eqref{inclusionV} the group $\opn{Pin}(1,3)$ is contained in $\clunotres(\mathbb{R})\subseteq\cldostres(\mathbb{R})$, and if we consider the mapping $\theta:\opn{Pin}(1,3)\rightarrow\opn{Spin}(2,3)$ as follows:
\begin{equation}\label{tita}
\theta{(x)}:=\begin{cases}
		x \ \ \text{if $x\in\opn{Spin}(1,3)$},\\
		ix \ \ \text{if $x\notin\opn{Spin}(1,3)$},
		\end{cases}
\end{equation}
where $i$ is the pseudoscalar of $\cldostres$, we have that $\Ad_{\theta(x)}=\Ad_{x}$ for all $x$. Due to this facts, the following diagram is commutative:
\begin{equation}\label{diag}
\begin{tikzcd}[sep=5em]
\opn{Pin(1,3)} \arrow[twoheadrightarrow]{d}{\opn{Ad}} \arrow[hookrightarrow]{r}{\theta}
& \opn{Spin}(2,3) \arrow[twoheadrightarrow]{d}{\opn{Ad}}\\
\opn{O}(1,3) \arrow[hookrightarrow]{r}{\ell}
& \opn{SO}(2,3) 
\end{tikzcd}
\end{equation}

As we see, by adding an extra time-like dimension an element in $\opn{Pin}(1,3)$ that lies in the odd part of the Clifford algebra, can be represented by an element in $\opn{Spin}(2,3)$ which is a subset of the even subalgebra of $\cldostres(\mathbb{R})$. This fact, albeit impressive is not new: it is analogous to the known property that orientation reversing transformations in $\R^{2}$ can be produced by orientation preserving transformations in $\R^3$, when $\R^2$ is considered as a subspace of $\R^{3}$. 

So far we have considered the trivial inclusion of $\mathbb{R}^{1,3}$ into $\mathbb{R}^{2,3}$ by its first four coordinates. But, as discussed earlier, we also have the twisted embedding, whose image $\widetilde{\mathbb{R}^{1,3}}$ is spanned by $\{\tilde{e}_{\mu}\ : \ \mu\in\{0,...,3\}\}$. The isomorphism between $\R^{1,3}$ and $\widetilde{\R^{1,3}}$ is given by $e_{\mu}\mapsto{\tilde{e}_{\mu}}$. Through this isomorphism, we can see what the action of an element in $\opn{Pin}(1,3)$ is on the elements of $\widetilde{\mathbb{R}^{1,3}}$. It is clear that $\opn{Ad}_{x}(\tilde{e}_{\mu})=\widetilde{\opn{Ad}_{x}(e_{\mu})}$ if $x$ is an element of $\opn{Spin}(1,3)$. Because of this, the transformation of components is going to be the same for a vector in $\mathbb{R}^{1,3}$ than for an element in $\widetilde{\mathbb{R}^{1,3}}$. Let us see what is the adjoint action of an element in $\opn{Pin}(1,3)\setminus\opn{Spin}(1,3)$, on an element $\tilde{v}=v^{\mu}\tilde{e}_{\mu}$:
\begin{align}\nonumber
\opn{Ad}_{x}(v^{\mu}\tilde{e}_{\mu})&=v^{\mu}x\tilde{e}_{\mu}x^{-1}=-v^{\mu}xie_{4}e_{\mu}x^{-1}=-v^{\mu}ixe_{4}x^{-1}xe_{\mu}x^{-1}=&\\
&=v^{\mu}ie_{4}xe_{\mu}x^{-1}=-ie_{4}(-\opn{Ad}_{x}(v^{\mu}e_{\mu}))=-\widetilde{\opn{Ad}_{x}(v^{\mu}e_{\mu})}.&
\end{align}
Hence, if components of a vector $v^{\mu}e_{\mu}$ are transformed according to $v^{\mu}\mapsto{v'^{\mu}}$, the components of the tilded vector $\tilde{v}={v}^{\mu}\tilde{e}_{\mu}$ transform according to $v^{\mu}\mapsto{-v'^{\mu}}$. Note that in signature $(1,3)$ the Clifford group and the twisted Clifford groups are equal\cite{crumey}, then the results just exposed prove that the adjoint action of this group on $\widetilde{\mathbb{R}^{1,3}}$ corresponds to the twisted adjoint action of the same group on $\mathbb{R}^{1,3}$.

\subsubsection{Parity and time reversal}

Two important elements of the Lorentz group $\opn{O}(1,3)$ are the parity transformation $P$ and the time reversal transformation $T$.

Indeed, it is possible to consider two types of time reversal transformations\cite{dewitt,winger}: one being unitary and the other anti-unitary. The anti-unitary one is also referred to as \emph{motion reversal}. The unitary time reversal is contained in the Lorentz group and we could call it a \emph{coordinate time reversal}, since its effect is to change any $4$-vector $(A^{0},\mathbf{A})$ to $(-A^{0},\mathbf{A})$. The anti-unitary time reversal is more similar to a proper time reversal which inverts the order of events on a timelike curve. Here will only consider the \emph{coordinate time reversal}.

The elements in $\opn{Pin}(1,3)$ that are mapped to a parity transformation under the adjoint action are $\pm{\gamma}_{0}$, as it is customary, we pick $P=\gamma_{0}$\cite{ramond}. For time reversal we pick $T=\gamma_{1}\gamma_{2}\gamma_{3}$\cite{dewitt} as it is mapped to coordinate time reversal under $\opn{Ad}$.

Using the mapping in Eq. \eqref{tita} we can find elements $\mathcal{P}$ and $\mathcal{T}$ in $\opn{Spin}(2,3)$ that reproduce the action of $P$ and $T$ respectively when they are restricted to $\mathbb{R}^{1,3}$. We obtain:
\begin{align}
\mathcal{P}&=e_{1}e_{2}e_{3}e_{4},\\
\mathcal{T}&=-e_{0}e_{4}.
\end{align}

Let us note that this elements invert also the last component of a vector in $\mathbb{R}^{2,3}$, and also that $\mathcal{T}=\exp(-\frac{\pi}{2}e_{0}e_{4})$ is connected to $\mathbf{1}$, while $T$ is not. In the treatment of the Dirac equation provided in Ref. \cite{arcodiaparantipar}, we identified the mass of the field to be proportional to the fifth component of the 5-momentum vector, hence this would imply that mass is inverted under both, time reversal and parity, but would be conserved under $\mathcal{PT}$. A deeper analysis of this situation is beyond the scope of this article but will be study in the future.

\section{The matrix representation and algebraic spinors}\label{secmatrep}

In this section we will introduce definitions and well known results regarding algebraic spinors in general Clifford algebras, in order to turn our attention to the real Clifford algebra $\cl_{2,3}$ in the next section. Since the algebras $\clunotresc$, $\clunotresr$ and $\cldostresr$ are simple algebras, we are going to consider mostly this case. The contents of this section are thoroughly explained in chapter 6 of \cite{VdR} and chapters 17 and 18 of \cite{Lounesto}.

Every Clifford algebra is rather a simple algebra or the direct sum of two isomorphic simple algebras. According to the Artin-Wedderburn theorem, every finite dimensional simple algebra is a matrix algebra, $\mathcal{M}(k,\mathbb{K})$, over certain division ring (or skew field), $\mathbb{K}$. This set is also isomorphic to the algebra of $\mathbb{K}$-endomorphisms, $\opn{End}_{\K}(S)$, of a vector space $S$ isomorphic to $\mathbb{K}^{k}$. Hence, every CA is rather isomorphic to $\mathcal{M}(k,\mathbb{K})$ or to $\mathcal{M}(k,\mathbb{K})\oplus\mathcal{M}(k,\mathbb{K})$, with suitable choices of $\K$, $k$ and $S$.

It is possible to construct $\K$ and $S$, and to obtain $k$, using uniquely elements in the algebra. However, this construction is not unique.

\subsection{The division ring $\mathbb{K}$}

Let us explore the possible ways to intrinsically construct the division ring $\K$ using elements of the Clifford algebra alone. The fact that $\mathcal{M}(k,\K)$ is a $\K$-bimodule, and the need to reproduce the product by scalars on $\mathcal{M}(k,\K)$, make it possible to have two definitions of $\K$--- one using a \emph{complete set of orthogonal idempotents}, and the other using just a single primitive idempotent. The later may seem simpler, but it produces a complicated rule for the product by scalars on $\mathcal{M}(k,\K)$. In contrast, the first possibility results in a more complex definition, but the product by scalars coincides with the associative product of the algebra. We are going to explore both of these constructions and use one or the other depending on the context.

In order to build the division ring $\mathbb{K}$ in a way in which the product by scalar on $\mathcal{M}(k,\mathbb{K})$ can be obtained by means of the Clifford product, we need to consider a \emph{complete set of orthogonal primitive idempotents}, $\{f_{1},...,f_{k}\}$ in the CA. This is, each $f_{i}$ is a primitive idempotent, the equation $f_{i}f_{j}=\delta_{ij}f_{i}$ holds for all ${i,j}$; and all the idempotents add up to 1. Every Clifford algebra possesses such a set. Now we define $\mathbb{K}$ to be
\begin{equation}\label{divalgebradef}
\mathbb{K}:=\left\{\sum_{i=1}^{k}f_{i}af_{i}\ |\ a\in\cl_{p,q}\right\}.
\end{equation}
We see that since $1=f_{1}+...+f_{k}\in\mathbb{K}$, then the unit of the division ring coincides with the unit of the Clifford algebra and given $\lambda\in{\mathbb{K}}$ and $a\in\cl_{p,q}$ we have that $\lambda{a}$ behaves like a product by a scalar.

Alternatively, it is customary to define $\mathbb{K}$ by picking a single primitive idempotent $f$, and taking it to be the set $f\cl_{p,q}f$. Although this construction is correct since $f\cl_{p,q}f\cong\mathbb{K}$, the unit of this set is $f$, and not $1$. This complicates the process of defining the $\Bbb{K}$-module structure on $\clpq$ since $f$ doesn't act like $1$ for all elements in $\clpq$.
In order to properly define the product by scalars it is necessary to extend the idempotent $f$ to a complete set of orthogonal idempotents. We are going to recall some basic facts and to use some formal results, but a full treatment of this construction is given in chapter 4 of \cite{VdR}. Let's call $f_{1}:=f$ and let $\{f_{1},...,f_{k}\}$ be the complete set of primitive orthogonal idempotents. For simple Clifford algebras we have that the sets $\mathcal{A}_{ij}:=f_{i}\cl_{p,q}f_{j}$ are all not null and they comply with:
\begin{equation}\label{Aijsets}
\mathcal{A}_{ij}\mathcal{A}_{lk}=\delta_{jl}\mathcal{A}_{ik},
\end{equation}
with the corresponding Peirce decomposition: 
\begin{equation}\label{peirce}
\cl_{p,q}=\bigoplus_{i,j=1}^{k}\mathcal{A}_{ij}.
\end{equation}
There exist a set 
\begin{equation}\label{baseconE}
\mathcal{B}=\left\{\mathcal{E}_{ij}\in{\mathcal{A}}_{ij}\ | \ i,j\in\{1,...,k\}\right\},
\end{equation}
such that $\mathcal{E}_{ij}\mathcal{E}_{lk}=\delta_{jl}\mathcal{E}_{ik}$, and $\mathcal{E}_{ii}=f_{i}$. The set $\mathcal{B}$ is a $\mathbb{K}$-basis for the Clifford algebra. It is possible to use the set $\mathcal{B}$ to construct an isomorphism between $\K$ and $f\clpq{f}$:
\begin{center}
\begin{minipage}[t]{0.4\textwidth}
\begin{align} \nonumber
\varphi:\K&\rightarrow{f}\clpq{f},\\ \nonumber
z&\mapsto{z}f,
\end{align}
\end{minipage}
\hspace{5mm}
\begin{minipage}[t]{0.4\textwidth}
\begin{align} \nonumber
\varphi^{-1}:f\clpq{f}&\longrightarrow\K, \\ \nonumber
\lambda&\longmapsto\sum_{j=1}^{k}\mathcal{E}_{j1}\lambda{\mathcal{E}}_{1j}.
\end{align}
\end{minipage}
\end{center}
Consequently the product by scalars when $\K$ is defined as $f\clpq{f}$ is given by:
\begin{equation}
\lambda\bullet{a}=\varphi^{-1}(\lambda)a=\sum_{j=1}^{k}\mathcal{E}_{j1}\lambda{\mathcal{E}}_{1j}a, \ \forall \lambda\in{f}\clpq{f},\ \forall{a}\in\clpq.
\end{equation}
From a matrix point of view this ambiguity in the definition of $\mathbb{K}$ can be seen in the following isomorphism:
\begin{equation}
\begin{pmatrix}
		\mathbb{K} & 0 & ... &0\\
		0 & 0 & ... &0\\
		\vdots & \vdots & \ddots & \vdots\\
		0 & 0 & ... & 0 \\
\end{pmatrix}
\cong
\mathbb{K}\begin{pmatrix}
		1 & 0 & ... &0\\
		0 & 1 & ... &0\\
		\vdots & \vdots & \ddots & \vdots\\
		0 & 0 & ... & 1 \\
\end{pmatrix}=\mathbb{K}\mathbf{1}.
\end{equation} 
The left hand side of the equivalence represent the set $f\clpq{f}$, while the right hand side is obtained by using the complete set of idempotents. It is clear that only the later reproduces the multiplication of a matrix by a scalar.
The cardinal of the complete set of orthogonal primitive idempotents, $k$, is given by $k=q-r_{q-p}$, where $r_{j}$ is the $j$-th \emph{Radon-Hurwitz number}, given in the following table:
\begin{equation}
\begin{tabular}{l*{8}{c}r}
$j$ &\vline&\ $0$\ &\ $1$\ &\ $2$\ &\ $3$\ &\ $4$\ &\ $5$\ &\ $6$\ &\ $7$\ \\
\hline
$r_{j}$ &\vline& \ 0 & 1 & 2 & 2 & 3 & 3 & 3 & 3 
\end{tabular}\ ,
\end{equation}
together with the rule $r_{j+8}=r_{j}+4$. $\mathbb{K}$ is always $\mathbb{R}$, $\mathbb{C}$ or $\mathbb{H}$, and the matrix form of the Clifford algebra is given by the well-known classification\cite{VdR,porteous,Lounesto}.

\subsection{The center of the Clifford algebra}

If we are dealing with a simple Clifford algebra, it is possible to use a known result from the theory of matrix algebra over the division ring $\mathbb{K}$. In this case the center of $\mathbb{K}$ and the center of the Clifford algebra are related by:
\begin{equation}
\mathcal{Z}(\cl_{p,q})\cong\mathcal{Z}(\mathcal{M}(k,\mathbb{K}))\cong{\mathcal{Z}(\mathbb{K})}.
\end{equation}
Hence, as long as $\mathbb{K}$ is not isomorphic to the quaternions, we have the isomorphism:
\begin{equation}\label{centerKCl}
\mathcal{Z}(\cl_{p,q})\cong{\mathbb{K}}.
\end{equation}
This equality is really useful, since the center of the Clifford algebras is easier to compute than $\K$. If the dimension of the basis vector space $V$ is even, we have that the center of the CA is isomorphic to $\mathbb{R}$ for real CA and to $\mathbb{C}$ for complex ones. If the dimension is odd, and the real CA is simple, then its center is given by the set $\bigwedge_{0}\oplus{\bigwedge_{n}}$ which is isomorphic to $\mathbb{C}$ for simple algebras. 

Indeed, given the definition for $\mathbb{K}$ given by Eq. \eqref{divalgebradef}, we have $\mathbb{K}=\mathcal{Z}(\cl_{p,q})$, and that the mapping $z\mapsto{z}f$ where $z$ is an arbitrary element in $\mathcal{Z}(\clpq)$ is an isomorphism.

\subsection{Spinors and semispinors}\label{spi}

As we explained above, if the Clifford algebra is simple, $\cl_{p,q}=\mathcal{M}(k,\mathbb{K})\cong{\opn{End}_{\mathbb{K}}{S}}$, and if it is semi-simple, $\cl_{p,q}=\mathcal{M}(k,\mathbb{K})\oplus\mathcal{M}(k,\mathbb{K})\cong{\opn{End}_{\mathbb{K}}{S}}\oplus{\opn{End}_{\mathbb{K}}{S}}$. The $\mathbb{K}$-linear space (or free $\mathbb{K}$-module $S$) has dimension $k$ and can be obtained easily from the Clifford algebra. Take a primitive idempotent $f$ on $\cl_{p,q}$, then $S=\cl_{p,q}f\cong{\mathbb{K}^{k}}$. In this way, we have a \emph{$\mathbb{K}$-linear representation of the Clifford algebra $\cl_{p,q}$}
\begin{align}
\rho:\cl_{p,q}&\rightarrow{\opn{End}_{\mathbb{K}}}(S)\\
a&\mapsto\rho(a):\psi{f}\mapsto{a\psi{f}},
\end{align}
which is called the \emph{left irreducible regular representation} of the Clifford algebra.

We call an element $\psi=\psi{f}\in{S}$ in the left irreducible regular representation, an \emph{algebraic spinor} if $\cl_{p,q}$ is simple, and an \emph{algebraic semispinor} if $\cl_{p,q}$ is not simple. Accordingly, the minimal left ideal $S$ will be referred to as \emph{the spinor space} or \emph{the semispinor space}, respectively.

In what follows we will assume that $\cl_{p,q}$ is simple.

In order to analyze the structure of the spinor space, it will be necessary to introduce a $\mathbb{K}$-basis for $S$. As long as we are consistent, $S$ can be regarded rather as a left or a right $\mathbb{K}$-module. In this article we will choose the latter. The right structure has the advantage of making possible to work with $\mathbb{K}$ by using only the set $f\cl_{p,q}f$, without the necessity of extending $f$ to a complete set of orthogonal primitive idempotents. This is due to the fact that $\psi{f}=\psi$ for all $\psi\in{S}=\cl_{p,q}f$, and $\cl_{p,q}f{f}\cl_{p,q}f={\cl_{p,q}}f$ for any primitive idempotent $f$\footnote{Because $S$ is a minimal left ideal and $SS=\cl_{p,q}f{f}\cl_{p,q}f$ is a left ideal contained in $S$, then $SS=S$.}.

Let the set $\{\mathbf{u}_{i}:i\in\{1,...,k\}\}$ be an ordered $\mathbb{K}$-basis for $S$, then any $\psi=af\in{S}$ can be written as:
\begin{equation}
\psi=af=\mathbf{u}_{i}\psi^{i},
\end{equation}
and every $\mathbf{u}_{i}$ fulfills $\mathbf{u}_{i}f=\mathbf{u}_{i}$.

The dual space of $S$, $S^{*}$, given by the set of $\mathbb{K}$-linear functions from $S$ to $\mathbb{K}$ is a left $\mathbb{K}$-module, which can be identified with $f\cl_{p,q}$. We will denote the dual basis of $\{\mathbf{u}_{i}\}$ by $\{\utilde{\mathbf{u}}^{i}\}$, implying:
\begin{align}
\utilde{\mathbf{u}}^{i}&\in{S}^{*}\\
\utilde{\mathbf{u}}^{i}(\mathbf{u}_{j})&=\tensor{\delta}{^{i}_{j}}f \ .
\end{align}
Every $\omega\in{S}^{*}$ can be written as:
\begin{equation}
\omega=\omega_{i}\utilde{\mathbf{u}}^{i},
\end{equation}
and the set
\begin{equation}
\left\{\mathbf{u}_{i}\utilde{\mathbf{u}}^{j}\ | \ i,j\in\{1,...,k\}\right\},
\end{equation}
is a basis of $\opn{End}_{\mathbb{K}}(S)\cong{\cl_{p,q}}$.

In \cite{VdR}, an interesting way to provide a $\K$-basis for $S$ is proposed. If we have a complete set of orthogonal primitive idempotents $f_{1}+...+f_{k}=1$, then we have the decomposition of the algebra given by Eq. \eqref{peirce}, and the existence of elements $\mathcal{E}_{ij}$ as in \eqref{baseconE}. If we take one of the primitive idempotents, for instance $f_{1}$, to generate the space of spinors, $S=\cl_{p,q}f_{1}$, then $\{\mathcal{E}_{i1}\ |\ i\in\{1,...,k\}\}$ constitutes a $\mathbb{K}$-basis for $S$, while $\{\mathcal{E}_{1i}\ |\ i\in\{1,...,k\}\}$ is a basis for $S^{*}=f_{1}\cl_{p,q}$.

Furthermore, if we start with a $\K$-basis $\{\ub_{i}\}$ for $S$; then for any fixed $i$ we have the element $\mathbf{u}_{i}\utilde{\mathbf{u}}^{i}$ is a primitive idempotent, and ${\sum}_{i=1}^{k}{\mathbf{u}_{i}\utilde{\mathbf{u}}^{i}}=1$. In this way, not only a complete set of orthogonal idempotents gives rise to a $\K$-basis for $S$, but also such a basis produces a complete set of orthogonal idempotents. Indeed it is possible to make the identification $\mathbf{u}_{i}\utilde{\mathbf{u}}^{j}=\mathcal{E}_{ij}$, and render the two approaches equivalent.

\subsection{Inner products on spinor spaces}

It is possible to define a \emph{conjugation} on the skewfield $\mathbb{K}$, and also define a inner product on the $\mathbb{K}$-module $S$. We will give the definition of inner product to be used in this article, since it will differ from the classical one on the positive-definiteness condition. From this section on we are going to use the definition $\K:=f\clpq{f}$, and in section \ref{Kdefn} we are going to comment on this and compare this choice with that of Eq. \eqref{divalgebradef}.

\begin{defn}
A \emph{conjugation}, $^{*}$, on a division ring $\mathbb{K}$ with unit $1_{\K}$ is an involution on $\mathbb{K}$ satisfying:
\begin{enumerate}
\item ${1_{\mathbb{K}}}^{*}=1_{\mathbb{K}}$,
\item $(ab)^{*}=b^{*}a^{*}$.
\end{enumerate}
\end{defn}

\begin{defn}\label{scalardef}
Let $S$ be a right module over a division ring $\mathbb{K}$, and $^{*}$ a conjugation on $\K$. We say that an application $h:S\times{S}\rightarrow\mathbb{K}$ is an \emph{inner product} or a \emph{scalar product} on S if the following properties are satisfied:
\begin{minipage}[t]{0.5\textwidth}
\begin{enumerate}
\item $h(\psi,\phi)=h(\phi,\psi)^{*}$,
\item $h(\psi,\phi+\chi)=h(\psi,\phi)+h(\psi,\chi) $,
\end{enumerate}
\end{minipage}%
\begin{minipage}[t]{0.5\textwidth}
\begin{enumerate}
\setcounter{enumi}{2}
\item $h(\psi,\phi\lambda)=h(\psi,\phi)\lambda \quad \forall{\lambda\in\mathbb{K}}$,
\item If $h(\psi,\phi)=0\quad\forall{\phi}$ then $\psi=0\in{S}$\ \ (non-degeneracy).
\end{enumerate}
\end{minipage}
\end{defn}

As customary, additional properties regarding this product are inferred from this axioms. For instance, that $h$ is anti-linear on the first argument: $h(\psi\lambda,\phi)=h(\phi,\psi\lambda)^{*}=(h(\phi,\psi)\lambda)^{*}=\lambda^{*}h(\psi,\phi)$.

Now we are in conditions to build different conjugations and inner products on spinor spaces of the Clifford algebra. In order to do so, we will make use of the two anti-automorphism of the Clifford algebra, namely reversion and Clifford conjugation.

It is possible to define an inner product in which the adjoint of elements in $\opn{End}_{\mathbb{K}}(S)$ coincides with one of the mentioned anti-morphisms. In this way, we will call $\bar{h}$ the inner product for Clifford conjugation, and $\hat{h}$ the one for reversion, which will have the properties:
\begin{align}\nonumber
\bar{h}(a\psi,\phi)&=\bar{h}(\psi,\bar{a}\phi),\\
\hat{h}(a\psi,\phi)&=\hat{h}(\psi,\hat{a}\phi),\label{inneradj}
\end{align}
where $\overline{a}$ denotes the Clifford conjugation applied on $a$, and $\hat{a}$ the reversion on $a$. Following reference \cite{VdR}, we will refer to any of the anti-morphisms by $^{\circ}$ and to the corresponding inner product simply by $h$. This two products are related by the \emph{structure spinor mapping}\cite{VdR}.

For every $\psi\in{S}$, $\psi{f}=\psi$, hence $\psi^{\circ}={f}^{\circ}\psi^{\circ}$. If $f^{\circ}=f$, then $\psi^{\circ}\psi\in{f\cl_{p,q}f}$ and the operation $\psi^{\circ}\psi$ defines an inner product satisfying relation (\ref{inneradj}). However if $f^{\circ}\neq{f}$, there's always an invertible element $s\in\cl_{p,q}$, such that $sf^{\circ}s^{-1}=f$\cite{Lounesto,VdR}\footnote{For simple algebras, this is a consequence of the adjoint action of $\clpq^{*}$ being transitive on primitive idempotents, together with $f$ and $f^{\circ}$ being both primitive idempotents. In semisimple algebras the adjoint action has two orbits on primitive idempotents, and $f$ and $f^{\circ}$ belong to the same one. See Ref. \cite{idempotentspaper} for a deeper analysis of idempotents on Clifford algebras}. Furthermore if the idempotent $f$ is built from elements of the standard basis that square to $1$, as it is customary and will be the case in section \ref{secembed}, $s$ can be chosen as an element from the standard basis. In this way the inner products defined by:
\begin{align}\nonumber
\hat{h}(\psi,\phi)=s\hat{\psi}\phi,\\
\overline{h}(\psi,\phi)=s\overline{\psi}\phi,\label{scalarprods}
\end{align}
with a suitable $s$ for each case, are scalar products. However, for the inner product to be well defined we still have to give the conjugation on $\mathbb{K}$.
\begin{defn}
Provided one of the antiautomorphisms of the Clifford algebra, $^{\circ}$, and $h(\psi,\phi)=s\psi^{\circ}\phi$, we define the \emph{conjugate} on an element $\lambda\in\mathbb{K}=f\cl_{p,q}f$, as
\begin{equation}\label{conjug}
\lambda^{*}=s\lambda^{\circ}s^{-1}.
\end{equation}
\end{defn}

Now, we can see that this operation defines a scalar product according to Def. \ref{scalardef}.

The group of transformation of spinors given by $\psi\mapsto{a\psi}$, where
\begin{equation}
a\in\opn{Aut}_{p,q}(S,h)=\{x\in\opn{\Gamma}_{p,q}\text{ such that }x^{\circ}x=1\},
\end{equation}
leave the inner product $h$ invariant. Hence, this group is usually referred to as \emph{the automorphism group of the inner product $h$}.

Lastly, note that since $h$ is non-degenerate, for every spinor $\psi$ there's a dual spinor $\psi^{\bullet}\in{S}^{*}$, such that $\psi^{\bullet}(\phi)=h(\psi,\phi)$ for every $\phi\in{S}$.

Since $$h(\mathbf{u}_{i}\psi^{i},\mathbf{u}_{j}\phi^{j})={\psi^{i}}^{*}h(\mathbf{u}_{i},\mathbf{u}_{j})\phi^{j}, $$every inner product can be represented by the hermitian matrix of components $h_{ij}=h(\mathbf{u}_{i},\mathbf{u}_{j})=s{\mathbf{u}_{i}}^{\circ}\mathbf{u}_{j}$, which is invertible because $h$ is non-degenerate. 

\section{Real spinors in $\cldostresr$, the inner product compatible with $\clunotres(\Bbb{C})$ and its Automorphism group}\label{secembed}

The set of algebraic spinors for $\cl_{2,3}(\mathbb{R})$ is determined once we pick a primitive idempotent $f$. In the complex Clifford algebra of Minkowski space using in physics, the Dirac representation of gamma matrices is achieved by taking the idempotent $f=1/4(1+\gamma_{0})(1+i\gamma_{1}\gamma_{2})$, (along with a suitable basis for $\cl_{1,3}(\mathbb{C})f$) which by means of the twisted embedding of $\clunotresr$ into $\cldostresr$ results in:
\begin{equation}\label{idempot}
f=\frac{1}{2}(1+\tilde{e_{0}})\frac{1}{2}(1+i\tilde{e_{1}}\tilde{e_{2}})=\frac{1}{2}(1+e_{1}e_{2}e_{3})\frac{1}{2}(1-e_{0}e_{3}e_{4}).
\end{equation}

The space of algebraic spinors will be the the left ideal $S=\cl_{2,3}f$. As it is well known, this ideal has a right linear structure over the division ring $f\cl_{2,3}f\cong{\mathbb{C}}$, which makes the space of spinor a complex vector space. Indeed, according to Eq. \eqref{centerKCl} we have:
\begin{equation}
f\cl_{2,3}f=\mathbb{R}f\oplus{i}\mathbb{R}f=\mathcal{Z}(\cl_{2,3})f\cong\Bbb{C}\ ,
\end{equation}
where $i=e_{0}e_{1}e_{2}e_{3}e_{4}$ is the (central) pseudoscalar of the algebra.

As was stated before, the isomorphism between $\clunotres(\mathbb{C})$ and $\cldostres(\mathbb{R})$, implies that the spinors of each of these algebras are exactly the same. Indeed, a real spinor in $\cldostres(\mathbb{R})$ is written as:
\begin{multline}
\psi=(\sigma+V^{A}e_{A}+S^{AB}e_{A}e_{B}+\\
T^{ABC}e_{A}e_{B}e_{C}+K^{ABCD}e_{A}e_{B}e_{C}e_{D}+i\omega)f,
\end{multline}
where $A,B,C$ and $D$ take values in $\{0,...,4\}$. Taking into account that $e_{4}f=-ie_{0}f$, then every term that has $e_{4}$ as a factor can be substituted by an imaginary component of an element in the Clifford algebra generated only by the first four unit vectors. Hence:
\begin{equation}
\cldostres(\mathbb{R})f=\clunotres{(\mathbb{R})}f+i\clunotres{(\mathbb{R})}f\cong\clunotres(\mathbb{C})f.
\end{equation}

A remarkable property of obtaining the complexified Clifford algebra in this manner is that the complex structure on the spinors is obtained from the free right $\mathbb{K}$-module structure that the spinor space have. Any algebraic spinor $\psi$ can be written as:
\begin{equation}
\mathbf{\psi}=\mathbf{u}_{i}\psi^{i},
\end{equation}

where $\{\mathbf{u}_{i}=\mathbf{u}_{i}f \}$ is a basis for $S$ as a right free module over $f\cl_{2,3}f$, and $\psi^{i}$ are the corresponding coefficients. Since $\cl_{2,3}\cong{\mathcal{M}(4,\mathbb{C})}$ then we have that $S\cong{\mathbb{C}^{4}}$. Using the command \texttt{spinorKbasis} from the \texttt{Clifford} package for Maple\cite{clipack} (as explained in appendix \ref{KbasisS}) we obtain a $\mathbb{K}$ basis for $S$,  $\{\mathbf{u_{i}}\}$:
\begin{equation}\label{base}
\mathbf{u}_{1}=f, \ \mathbf{u}_{2}=e_{1}f, \ \mathbf{u}_{3}=e_{0}f, \ \mathbf{u}_{4}=e_{0}e_{1}f .
\end{equation}

Once we have defined the algebraic spinor space with its right linear $\mathbb{C}$ structure, then it is possible to define the inner product associated to it.

Recall that any maximally symmetric inner product is related to some of the anti-automorphisms in the Clifford algebra, rather Clifford conjugation or reversion. Let us note that we want a conjugation operation which mimics the complex conjugation of the complex numbers. Let ``$^{-}$'' be the Clifford conjugation, it can be verified that:
\begin{equation}
\overline{f}=f, \ \ \overline{i}=-i.
\end{equation}

This implies that the inner product in spinor space associated to the Clifford conjugation takes the simple form:
\begin{equation}\label{prodclif}
\bar{h}(\psi,\phi)=\overline{\psi}\phi,
\end{equation} 
and also that the complex conjugation in $f\cl_{2,3}f$ is simply the Clifford conjugation. We can see from Eq. \eqref{prodclif} that the dual of $\psi$ coincides with its Clifford conjugated.
We can compute the matrix elements of this inner product $h_{ij}=\overline{\mathbf{u}_{i}}\mathbf{u}_{j}\in{\mathbb{K}}$ using the Clifford Algebra package for Maple and we obtain:
\begin{equation}\label{gamma0}
h_{ij}=\overline{\mathbf{u}_{i}}\mathbf{u}_{j}=
\begin{pmatrix}
f\ &0 \ &0 \ &0\\
0\ &f \ &0\ &0\\
0\ &0\ &-f \ &0\\
0\ &0\ &0 \ &-f							
\end{pmatrix}\ ,
\end{equation} 
which has the form of the matrix $\gamma_{0}$ in the Dirac representation, with the unit $1$ replaced by the unit $f$ of $f\cldostresr{f}$. Hence, the dual spinor in the dual basis, is given by
\begin{equation}
({\psi^{i}}^{*}\overline{\mathbf{u}_{i}}\mathbf{u}_{j})\utilde{\mathbf{u}}^{j}=(\psi^{\dagger}\gamma_{0})_{j}\utilde{\mathbf{u}}^{j},
\end{equation}
which coincides with the dual spinor in $\cl_{1,3}(\mathbb{C})$. Let us note that once we have obtained $\{\mathbf{u}_{i}\}$ it is easier to compute the elements $\{\overline{\mathbf{u}_{i}}\}$ than $\{\utilde{\mathbf{u}^{i}}\}$. It is clear however that $\utilde{\mathbf{u}^{j}}=h^{ji}\overline{\mathbf{u}_{i}}$, where $h^{ij}$ is the inverse matrix of $h_{ij}$. In our particular case $h^{ij}=h_{ij}$.

The group of automorphisms of this inner product is given by:
\begin{equation}
\opn{Aut}_{2,3}(S,\overline{h})=\{x\in\opn{\Gamma}_{2,3}\ | \ \overline{x}x=\mathbf{1}\},
\end{equation}
however, in order to consider only transformations that are closed in ${\mathbb{R}^{1,3}}$ it would be more appropriate to restrict ourselves to the group:
\begin{equation}
^{5D}\opn{Aut}_{1,3}(S,\overline{h})=\{x\in\opn{Aut}_{2,3}(S,\overline{h}) \ | \ xvx^{-1}\in\Bbb{R}^{1,3},\ \ \forall{v\in\mathbb{R}^{1,3}}\ \}.
\end{equation}
Clearly  $\opn{Aut}_{1,3}(S,\overline{h})=\{x\in\opn{\Gamma}_{1,3}\ | \ \overline{x}x=\mathbf{1}\}=\tensor{\opn{Pin}}{^{\wedge}_{+}}(1,3)$ is contained in this group, but $^{5D}\opn{Aut}_{1,3}(S,\overline{h})$ is larger. For instance the unit circle $S^{1}$ contained in the center of $\cldostresr$ is in $^{5D}\opn{Aut}_{1,3}(S,\overline{h})$ but not in $\opn{Aut}_{1,3}(S,\overline{h})$. This fact allows us to treat the electromagnetic $\opn{U}(1)$-gauge transformation from the perspective of real Clifford algebras\cite{arcodiaparantipar}.

Any element $A\in\cldostres$ can be written represented in the basis $\{\mathbf{u}_{i}\utilde{\mathbf{u}}^{j}\}$ by the components $\tensor{A}{^{i}_{j}}$ defined as\footnote{See section \ref{Kdefn} to see in which sense $A^{i}_{j}$ are the components of $A$ in the basis $\{\mathbf{u}_{i}\utilde{\mathbf{u}}^{j}\}$}:
\begin{equation}
\tensor{A}{^{i}_{j}}=\utilde{\mathbf{u}}^{i}A\mathbf{u}_{j}=h^{il}\overline{\mathbf{u}_{l}}A\mathbf{u}_{j}.
\end{equation}
Using this definition together with the facts that $\tilde{e}_{0}\mathbf{u}_{1}=\mathbf{u}_{1}$, $\tilde{e}_{0}\mathbf{u}_{2}=\mathbf{u}_{2}$, $\tilde{e}_{0}\mathbf{u}_{3}=-\mathbf{u}_{3}$ and $\tilde{e}_{0}\mathbf{u}_{4}=-\mathbf{u}_{4}$; it can be checked that the components of $\gamma_{0}$, that we have identified with $\tilde{e}_{0}$, are given by $\tensor{(\gamma_{0})}{^{i}_{j}}=\tensor{(\tilde{e}_{0})}{^{i}_{j}}=h_{ij}=h^{ij}$, in consistence with Eq. \eqref{gamma0}.

If we define the \emph{matrix adjoint of an element} $A$, $A^{\dagger}$, as the element whose components are:
\begin{equation}
\tensor{(A^{\dagger})}{^{i}_{j}}=(\tensor{A}{^{j}_{i}})^*,
\end{equation}
we have the relation
\begin{equation}\label{matadj}
A^{\dagger}=\gamma_{0}\overline{A}\gamma_{0},
\end{equation}
which can be proven component-wise by direct calculation:
\begin{align*}
\tensor{(A^{\dagger})}{^{i}_{j}}&=(\tensor{A}{^{j}_{i}})^*=\overline{\tensor{A}{^{j}_{i}}}=\overline{h^{jl}\overline{\mathbf{u}_{l}}A\mathbf{u}_{i}}=\overline{\ub_{i}}{}\overline{A}u_{l}\overline{h^{jl}}=h_{im}\utilde{\ub^{m}}{}\overline{A}{}u_{l}{h^{lj}}=&\\
&=h_{im}\tensor{\overline{A}}{^{m}_{l}}{h^{lj}}=\tensor{\left(\gamma_{0}\right)}{^{i}_{m}}\tensor{\overline{A}}{^{m}_{l}}\tensor{\left(\gamma_{0}\right)}{^{l}_{j}}=\tensor{\left(\gamma_{0}\overline{A}\gamma_{0}\right)}{^{i}_{j}}.&
\end{align*}
Note that, because $\overline{\tilde{e_{\mu}}}=\tilde{e_{\mu}}$, and we have identified $\gamma_{\mu}$ with $\tilde{e_{\mu}}$, then Eq. \eqref{matadj} implies the known property:
\begin{equation}
{\gamma_{\mu}}^{\dagger}=\gamma_{0}\gamma_{\mu}\gamma_{0}.
\end{equation}

We have yet to analyze the inner product associated to reversion. Since $\hat{f}\neq{f}$ it is necessary to find an element $s$, such that $s\hat{f}s^{-1}=f$. It can be verified that $s=e_{0}e_{1}$ is such an element\footnote{In appendix \ref{sforreversion} we write a very simple algorithm to find such an element in the Clifford standard basis.}. Using Eq. \eqref{conjug} to define the conjugation in $f\clpq{f}=\mathcal{Z}(\clpq)f$, we obtain that:
\begin{equation}
f^{*}=f \ \ ; \ \ (if)^{*}=if \ \ ,
\end{equation}  
which implies that the conjugation of scalars for reversion is simply the identity. We conclude that this inner product doesn't reproduce the usual complex structure of $\clunotres(\mathbb{C})$, and don't go further in its study.

\subsection{On the possible definitions of $\K$}\label{Kdefn}

From section \ref{spi} on, we have taken the definition of $\K$ to be $f\cldostresr{f}$, instead of the more well behaved definition in Eq. \eqref{divalgebradef} that needs a whole set of orthogonal primitive idempotents. There are some aspects that need to be clarified regarding the set $\mathcal{B}=\left\{\mathbf{u}_{i}\utilde{\mathbf{u}}^{j}\ | \ i,j\in\{1,...,k\}\right\}$ being a $\K$-basis for $\clpq$.

First suppose that we had defined $\K$ to be $f\clpq{f}$, and we wanted to expand an element $a\in\clpq$ in the mentioned $\K$-basis. In this framework, we would have $A=A^{i}_{j}\mathbf{u}_{i}\utilde{\mathbf{u}}^{j}$, but since $A^{i}_{j}\in{f}\clpq{f}$ this would imply that $A\in{f}\clpq\neq\clpq$ and consequently, not every element in the Clifford algebra could be represented in this way. This would imply that $\mathcal{B}$ is not a basis. What is really happening is that we are using either the wrong scalar multiplication on $\mathcal{B}$, or the wrong definition of $\K$. If we want to keep the product to be the usual Clifford product we need to consider the definition in Eq. \eqref{divalgebradef} for $\K$. In this sense, note that $A$ can be written as $A^{i}_{j}\mathbf{u}_{i}\utilde{\mathbf{u}}^{j}$ where $A^{i}_{j}=\ub_{k}\utilde{\ub^{i}}A\ub_{j}\utilde{\ub}^{k}\in\K$:
\begin{equation}
A^{i}_{j}\ub_{i}\utilde{\ub}^{j}=\ub_{k}\utilde{\ub^{i}}A\ub_{j}\utilde{\ub}^{k}\ub_{i}\utilde{\ub}^{j}=\ub_{k}\utilde{\ub^{i}}A\ub_{j}\delta^{k}_{i}\utilde{\ub}^{j}=\underbrace{\ub_{i}\utilde{\ub^{i}}}_{=\mathbf{1}}A\underbrace{\ub_{j}\utilde{\ub}^{j}}_{=\mathbf{1}}=A,
\end{equation}
and the same can be proved for $\mathbf{u}_{i}\utilde{\mathbf{u}}^{j}A^{i}_{j}$. However, there is an isomorphism between $\K$ and $f\clpq{f}$ given by:
\begin{center}
\begin{minipage}{0.4\textwidth}
\begin{align} \nonumber
\varphi:\K&\rightarrow{f}\clpq{f},\\ \nonumber
z&\mapsto{z}f,
\end{align}
\end{minipage}
\hspace{5mm}
\begin{minipage}{0.4\textwidth}
\begin{align} \nonumber
\varphi^{-1}:f\clpq{f}&\longrightarrow\K, \\ \nonumber
\lambda&\longmapsto\ub_{k}\lambda\utilde{\ub}^{k},
\end{align}
\end{minipage}
\end{center}
which allows us to represent any element of the CA using coefficients in $f\clpq{f}$ instead of $\K$. By inspecting the form of the coefficients $A^{i}_{j}\in\K$, we see that the corresponding elements in $f\clpq{f}$ by $\varphi$ is given by $\utilde{\ub}^{i}A\ub_{j}$, and we have that $A=\varphi^{-1}(\utilde{\ub}^{i}A\ub_{j})\ub_{i}\utilde{\ub}^{j}$ (which is indeed also a scalar product). In this sense, there is enough information in the elements $\utilde{\ub}^{i}A\ub_{j}\in{f}\clpq{f}$, and we say that this elements represent $A$ in the basis $\{\ub_{i}\utilde{\ub}^{j}\}$.

An important consequence of the choice $\K=f\clpq{f}$ is that the inner products take the very simple forms \eqref{scalarprods}. Otherwise, they should be mapped to $\K$ by means of the isomorphism $\varphi^{-1}$, resulting in the redefinition of the scalar products:
\begin{align}\nonumber
\hat{h}(\psi,\phi)=\ub_{k}\left(s\hat{\psi}\phi\right)\utilde{\ub}^{k},\\
\overline{h}(\psi,\phi)=\ub_{k}\left(s\overline{\psi}\phi\right)\utilde{\ub}^{k},
\end{align}
for each antimorphism. Hence, the more appropriate definition of $\K$ depends on the problem we are working on.

\section{Final comments}\label{secfinal}

We have studied the structure of the Clifford algebra $\cldostresr$ and its relation with the complex spacetime algebra. On the first part of this article we analyzed the embedding of $\clunotresr\subseteq\cldostresr$, and established the inclusion $\opn{Pin}(1,3)\hookrightarrow{\opn{Spin}}(2,3)$. In order to do so, we used the fact that for odd-dimensional spaces, the adjoint action only covers the special orthogonal group, and that there exist an injection $\opn{O}(1,3)\hookrightarrow\opn{SO}(2,3)$. In particular we have obtained the elements in $\opn{Spin}(2,3)$ that correspond to time reversal and parity.

We have also studied two possible ways of embedding the Minkowski vector space into $\cldostresr$: the trivial one $\gamma_{\mu}\mapsto{e}_{\mu}$, whose image we have called $\mathbb{R}^{1,3}$; and the twisted embedding $\gamma_{\mu}\mapsto-ie_{4}e_{\mu}$, whose image we have called $\widetilde{\Bbb{R}^{1,3}}$. Regarding these spaces, we have shown that the adjoint action of $\pinunotres$ is well defined not only for $\R^{1,3}$, but also for $\widetilde{\R^{1,3}}$. Furthermore the adjoint action of an element $x\in\pinunotres$ on the element $v^{\mu}\tilde{e_{\mu}}$ is equal to the \emph{twisted} adjoint action of the same element $x$ on the vector $v^{\mu}e_{\mu}$. 

On the second part of this article we have studied the real spinors of $\cldostresr$. In this analysis we have defined the complex structure of the algebra which renders $\clunotresc\cong\cldostresr$, and constructed the inner product of spinors associated with Clifford conjugation. We have also proven that this product is the one compatible with the usual product of complex spinors in $\clunotresr$. Additionally we have defined the group $^{5D}\opn{Ad}_{1,3}(S,\overline{h})$ that consist of the elements leaving the spinor inner product invariant and at the same time preserving the spaces $\mathbb{R}^{1,3}$ and $\widetilde{\mathbb{R}^{1,3}}$ under the adjoint action.

In this second part we have made use of the \texttt{Clifford} package for Maple\cite{clipack} to compute a $\mathbb{K}$-basis for the spinor space and also to work with the reversion inner product.

From the point of view of physics, using $\cldostresr$ to complexify the real algebra $\clunotresr$ implies adding a timelike extra dimension, and working in a universe with two times\cite{bars}. We recently used this approach to provide an interpretation of particle and antiparticle states in the Dirac theory\cite{arcodiaparantipar}, and a more detailed and rigorous study of the Clifford algebra $\cldostresr$ and the compatibility with the embedding of the Pin group from a mathematical point of view is given in this article.

Note that at an algebraic level there is not distinction between usual complexification and adding an extra timelike dimension, hence, results involving complex numbers can also be explained by an extra timelike dimension. However, the hypothesis of the extra timelike dimension could be ruled out in the context of induced matter theories for the massive spinor fields. The starting point of these theories\cite{arcodiaparantipar,sanchez,wesson} is a massless Dirac equation in the 5D spacetime, from which a massive 4D Dirac equation is obtained once the fifth component of the momentum vector of the field is identified as the induced mass in the 4D spacetime. Consequently the mass is not a scalar but the last component of a 5-vector. This implies that the parity and time reversal operators, $\mathcal{P}$ and $\mathcal{T}$, obtained in the present article would reverse the sign of the mass. In a future work it would be interesting to study this possibility, and analyze if there is a relation with the work of Barut and Ziino\cite{barut}, who considered Dirac equations with opposite sign of mass for particles and antiparticles.

An interesting problem to study in the future is if the two times complexification of the Clifford algebra can be extended to \emph{quantum Clifford algebras} and \emph{quantum algebraic spinors}\cite{quantumspinors,quantumfauser,quantumlounesto}. Quantum Clifford algebras are obtained when an antisymmetric part is added to the bilinear form $g_{\mu\nu}dx^{\mu}\otimes{dx^{\nu}}$. In this fashion $g_{\mu\nu}$ is replaced by $g_{\mu\nu}+A_{\mu\nu}$ where $A_{\mu\nu}$ is antisymmetric. These algebras are important in high energy regimes as they account for quantum phenomena \cite{quantumspinors,quantumfauser}. Quantum algebraic spinors are defined in the same fashion as algebraic spinors but when the Clifford algebra is replaced by the quantum Clifford algebra. As can be seen in Ref. \cite{quantumspinors} the idempotent $f$ in Eq. \eqref{idempot} changes according to:
\begin{equation}
f\mapsto{f}+f(A)=f+\frac{i}{4}(A_{12}\gamma_{0}+A_{20}\gamma_{1}+A_{01}\gamma_{2}),
\end{equation}
where $f(A)$ is an imaginary vector in 4D. However in our 5D approach, by means of the twisted embedding of $\clunotresr$ into $\cldostresr$, we can see that $f(A)$ would be a real bivector instead. In general an arbitrary element of the Clifford algebra, $b$, changes according to:
\begin{equation}
b\mapsto{b} + b(A),
\end{equation}
where in the usual $\clunotresc$ formalism the component $b(A)$ is an element in $\bigwedge_{0}(V)\oplus\bigwedge_{1}(V)\oplus\bigwedge_{2}(V)$ but in our 5D approach it is an element in $\bigwedge_{0}(V)\oplus\bigwedge_{3}(V)\oplus\bigwedge_{2}(V)$. 

Since extra-dimensions are widely used in theories of unification and quantum gravity, and since in particular 5D spacetimes are considered the low-energy limit of this theories (e.g. 11D supergravity\cite{sugra,sugra5d}) it would be important to extend the two times complexification of the spacetime algebra to the quantum spacetime algebra.
 
\appendix
\section{Code for computations on $\cldostres(\mathbb{R})$ using the \texttt{Clifford} package for Maple}
In this appendix we use the \texttt{Clifford} package for Maple\cite{clipack}, developed by Ab{\l}amowicz and Fauser to perform some calculations.
We are going to use the following preamble, which contains information about the Clifford algebra and define the idempotent $f$, for all the calculations:
\begin{center}
\begin{BVerbatim}
with(Clifford):with(LinearAlgebra):
clibasis:=cbasis(5);
B:=linalg[diag](1,-1,-1,-1,1);
f:=cmul(1/2*(1+e2we3we4),1/2*(1-e1we4we5));
\end{BVerbatim}
\end{center}

\subsection{Computation of the $\mathbb{K}$-basis for $S$}\label{KbasisS}

In order to compute the $\mathbb{K}$-basis for $S$ we use the command \texttt{spinorKbasis}. This command takes four inputs: 
\begin{enumerate}
\item a real basis for the space $S$,
\item the idempotent $f$,
\item a list of elements from the standard basis of $\clpq$ that generate $\mathbb{K}$ as a real vector space.
\item the string ``left'' or ``right'' indicating if $S$ is the left minimal ideal, $\clpq{f}$, or the right one, $f\clpq$.
\end{enumerate}

As a matter of fact \texttt{spinorKbasis} returns three elements, the $\mathbb{K}$-basis for $S$ is the first one. Let's find all this objects needed as inputs using the Clifford package. To generate a real basis of $S$ we are going to use the command \texttt{minimalideal} in the following way:
\begin{center}
\begin{BVerbatim}
realbasisS:=minimalideal(clibasis,f,'left');
\end{BVerbatim}
\end{center}
Lastly, we define the set of generators of $\mathbb{K}$, and compute the $\mathbb{K}$-basis for $S$:
\begin{center}
\begin{BVerbatim}
Kgenerators:=[Id,e1we2we3we4we5];
KbasisS:=spinorKbasis(realbasisS,f,Kgenerators,'left')[1];
\end{BVerbatim}
\end{center}
The basis obtained in this way, although rendering $\gamma_{0}=-ie_{4}e_{0}$ diagonal, it is not the matrix for the Dirac representation. However, by the following reordering:
\begin{center}
\begin{BVerbatim}
DiracKbasisS:=[KbasisS[1],KbasisS[3],KbasisS[2],KbasisS[4]];
\end{BVerbatim}
\end{center}
we obtain the basis in Eq. \eqref{base}.

\subsection{Finding the element $s$ for the reversion inner product}\label{sforreversion}

The following algorithm finds the element $s$ in Eq. \eqref{conjug}, as long as it belongs in the standard basis, which is the case for the idempotent $f$ in Eq. \eqref{idempot}.
\begin{center}
\begin{BVerbatim}
l:=nops(clibasis): s:=0:
for i from 1 to l do
      if cmul(clibasis[i],reversion(f))-cmul(f,clibasis[i])=0 and s=0 
         then s:=clibasis[i]; 
      end if;
end do;
print(s);
\end{BVerbatim}
\end{center}
\bibliography{refs}
\bibliographystyle{spmpsci}

\end{document}